%% file: template.tex
\documentclass[a4paper]{article}

\usepackage{arxiv}

\usepackage[utf8]{inputenc} 
\usepackage[T1]{fontenc}    
\usepackage{hyperref}       
\usepackage{url}            
\usepackage{booktabs}       
\usepackage{amsfonts}       
\usepackage{nicefrac}       
\usepackage{microtype}      
\usepackage{lipsum}		
\usepackage{graphicx}
\usepackage{natbib}
\usepackage{doi}
\usepackage{latexsym}
\usepackage{amsmath}
\usepackage{amssymb}
\usepackage[mathscr]{eucal}
\usepackage{graphicx,color}
\usepackage{mathtools}
\usepackage{stmaryrd}
\usepackage{url}
\usepackage{xypic}

\usepackage{bm}
\input{notation_defs}

\usepackage{bbm}
\usepackage{tikz-cd}
\usepackage{psfrag}

\def \R {{\mathbb R}}

\def \S {{\mathbb S}}
\def \L {{\mathbb L}}

\providecommand{\ddt}{\frac{d}{dt}}

\usepackage{amsthm}
\newtheorem{theorem}{Theorem}
\newtheorem{proposition}{Proposition}
\newtheorem{definition}{Definition}
\newtheorem{assumption}{Assumption}
\newtheorem{lemma}{Lemma}
\newtheorem{remark}{Remark}

\title{
Equivariant Observer Design on $\SL(3)$ for Image Intensity-Based Homography Estimation}

    \author{ \href{https://orcid.org/0000-0003-1116-7415}{\includegraphics[scale=0.06]{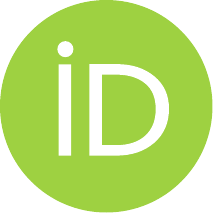}\hspace{1mm}Tarek Bouazza} \\
	I3S, CNRS, Université Côte d'Azur\\
    06900 Sophia Antipolis, France \\
	\texttt{bouazza@i3s.unice.fr} \\
	\And
    \href{https://orcid.org/0000-0003-4391-7014}{\includegraphics[scale=0.06]{orcid.pdf}\hspace{1mm}Pieter van Goor} \\
	School of Aerospace, Mechanical, and \\ Mechatronic Engineering (AMME) \\
        Faculty of Engineering, University of Sydney \\
        NSW, 2006, Australia \\
	\texttt{pieter.vangoor@sydney.edu.au} \\
	\And
    \href{https://orcid.org/0000-0002-7803-2868}{\includegraphics[scale=0.06]{orcid.pdf}\hspace{1mm}Robert Mahony} \\
	Systems Theory and Robotics Group\\
        Australian National University\\
	ACT, 2601, Australia \\
	\texttt{Robert.Mahony@anu.edu.au} \\
    \And
     \href{https://orcid.org/0000-0002-7779-1264}{\includegraphics[scale=0.06]{orcid.pdf}\hspace{1mm}Tarek Hamel} \\
        I3S, CNRS, Université Côte d'Azur\\
        and Institut Universitaire de France (IUF) \\
        06900 Sophia Antipolis, France \\
        \texttt{thamel@i3s.unice.fr} \\
}

\renewcommand{\headeright}{Preprint}
\renewcommand{\undertitle}{Preprint\thanks{Under review}}

\renewcommand{\shorttitle}{}

\hypersetup{
pdftitle={A template for the arxiv style},
pdfsubject={q-bio.NC, q-bio.QM},
pdfauthor={David S.~Hippocampus, Elias D.~Striatum},
pdfkeywords={First keyword, Second keyword, More},
}

\begin{document}
\maketitle

\begin{abstract}
This paper addresses the problem of homography estimation using a nonlinear observer designed on the Lie group $\SL(3)$ that exploits the full image information through direct image registration.
Unlike traditional feature-based methods, which rely on extensive feature extraction and matching, the proposed approach formulates an observer that minimises a cost function defined directly in terms of image pixel intensities.
Explicit conditions ensuring the non-degeneracy of the cost function are derived, and a comprehensive analysis is conducted to characterise and generate degenerate (unobservable) image configurations.
Theoretical results demonstrate local exponential convergence of the observer.
To improve local convergence properties, a second-order observer variant is introduced by incorporating the Hessian of the cost function into the correction term.
Simulation results demonstrate the performance of the proposed solutions on real images.
\end{abstract}

\section{Introduction}

Advances in exteroceptive sensors, such as high-resolution cameras, LiDAR and RGB-D devices, have enabled autonomous systems to capture detailed information about their environment. Exploiting this information is critical for applications such as Simultaneous Localisation and Mapping (SLAM) \citep{forster2014svo} and 3D reconstruction \citep{geiger2011stereoscan}.
    Directly estimating rigid-body motion and geometric transformations from raw measurements presents a promising alternative to traditional landmark- or feature-based methods, which rely on extracting and matching sparse keypoints and can fail in environments where features are scarce or affected by noise.
    Dense methods exploit the full information available across the sensor domain, such as pixel intensities, optical flow fields, or dense depth maps to achieve more robust and accurate estimation \citep{forster2014svo,kerl2013robust}.
	A prominent example is the computation of the \emph{homography} matrix, which describes the relative transformation between two camera views of a planar surface \citep{hartley2003multiple}. 
    
	Homography estimation has been widely studied in the computer vision, computer graphics, and robotics communities. 
    Traditional techniques extract a sparse set of salient features, such as points, lines, conics and contours, to perform feature matching and then compute a homography estimate by solving algebraic constraints or optimisation problems over individual frames or a set of selected images \citep{kaminski2004multiple,agarwal2005survey}. 
	One widely used method is the Direct Linear Transform (DLT) algorithm \citep{hartley2003multiple}, which computes the homography from at least four point correspondences by solving a linear system via Singular Value Decomposition (SVD).
	Direct (or intensity-based) methods, on the other hand, estimate homographies using raw image pixel intensities without explicit feature extraction or matching \citep{szeliski2007image}. They generally outperform feature-based methods in situations where feature detection is difficult or images lack distinctive features.
    However, direct methods typically have a limited convergence range and require complex optimisation techniques and higher computational cost to efficiently process the entire image data, especially for high-resolution images.
	Classical direct approaches formulate the problem as an image registration (or alignment) task \citep{shum1998construction}, which aims to find the homography parameters that minimise a photometric cost between a reference and a target image, and optimise a cost function based on pixel intensity differences between images. Common cost functions include the sum of squared differences \citep{baker2004lucas,benhimane2004real} and the normalised cross-correlation \citep{szeliski2007image}. The resulting optimisation problems are solved iteratively using algorithms such as steepest-descent, Gauss-Newton, and Levenberg-Marquardt \citep{baker2004lucas,szeliski1995direct}. 
	A seminal method is the Lucas-Kanade algorithm \citep{lucas1981iterative}, which updates the homography parameters iteratively to minimize intensity differences. The forward compositional algorithm \citep{shum2002construction} improves this approach by estimating an incremental warp rather than an additive increment, while the inverse compositional method \citep{baker2004lucas} further improves efficiency by reversing the roles of reference and target images, allowing pre-computation of the inverse Hessian and steepest descent images. 
	The efficient second-order minimization (ESM) method \citep{benhimane2004real} approximates the Hessian to achieve faster convergence and a larger convergence domain.

    In recent years, significant work has been dedicated to designing nonlinear observers for the estimation of homographies with dynamics.
	A number of successful approaches have exploited the $\SL(3)$ Lie group structure of the set of homographies \citep{benhimane2007homography} to yield powerful stability guarantees \citep{mahony2012nonlinear,hamel2011homography,hua2019feature,hua2020nonlinear}.
    By exploiting the temporal correlations and velocity information inherent to robotics problems, these methods outperform algorithms that process each image independently.
	\cite{hamel2011homography} proposed a nonlinear deterministic observer on $\SL(3)$ using point-feature correspondences built upon Lyapunov-based design principles. 
	Later, \cite{hua2019feature} extended this line of work by developing a recursive observer from the general theory of gradient-based nonlinear observers for kinematic systems with equivariant outputs \citep{mahony2013observers}.
	Further extensions of the framework in \citep{mahony2013observers} incorporated richer feature modalities.
    For instance, image line and point-feature correspondences were combined in \citep{hua2020nonlinear} for computing the innovation term, while \cite{hua2017explicit} considered conic-feature correspondences such as ellipses and hyperbolas. 
	Beyond deterministic designs, \cite{bernal2023bayesian} formulated a Bayesian filtering framework on $\SL(3)$, and derived both an iterative EKF and an interacting multiple model (IMM) filter to estimate the homography using point-feature correspondences. 
	More recently, an equivariant filter (EqF) was proposed by the authors in \citep{bouazza2023equivariant} to estimate homography from rigid-body velocity measurements and point-feature correspondences.

This paper build on the authors’ prior work on intensity-based homography estimation on $\SL(3)$ \citep{bouazza2023nonlinear}
and proposes a new class of nonlinear observers that directly exploit the photometric information contained in image data.
The approach is based on the natural group action of $\SL(3)$ on the sphere $\S^2$ and its induced action on the Sobolev space of weakly twice differentiable square-integrable functions $W^{2,2}(\S^2)$.
Images are modelled as functions warped by an evolving homography, and a photometric cost function is defined through this group action. A nonlinear observer on $\SL(3)$ is then designed to minimise this cost and estimate the dynamic homography.
The main contributions are:
\begin{enumerate}
	\item We show that the action of $\SL(3)$ on the sphere $\S^2$ induces an action of $\SL(3)$ on $W^{2,2}(\S^2)$ and exploit this to define a direct photometric cost function for homography estimation from raw image intensities. 
	We then derive a gradient-based observer on $\SL(3)$ to minimise this cost.

    \item We perform an observability analysis and identify a local non-degeneracy condition characterised by the stabilizer subgroup of the reference image under $\SL(3)$, and illustrate degenerate cases with unobservable image configurations.

    \item We prove global stability and local exponential convergence of the observer under the derived non-degeneracy condition and introduce a variant that scales the correction by the inverse Hessian of the cost function to improve convergence. 

    \item We present simulation results on real images to validate the convergence of the proposed observers. 
\end{enumerate}

The remainder of the paper is organised as follows. Section \ref{sec:notation} introduces the necessary preliminaries on square-integrable functions on the sphere and the special linear group $\SL(3)$. Section \ref{sec:observability} formulates the homography estimation problem and presents a detailed observability analysis in the dense image intensity measurement setting, where observability conditions are established. Section \ref{sec:observer_derivation} derives the dense observer on $\SL(3)$, including the gradient-based and second-order variants to improve convergence. 
Section \ref{sec:simulation} presents simulation results on real image sequences to demonstrate the effectiveness and robustness of the proposed observer.
Finally, Section \ref{sec:conclusion} concludes the paper and discusses directions for future research.

\section{Preliminary material}
\label{sec:notation}

	\subsection{Notation}

    We denote by $\R$ and $\R_{\geq0}$ the sets of real and nonnegative real numbers, respectively. The $n$-dimensional Euclidean space is denoted by $\R^n$.
	We denote by $\R^{m \times n}$ the set of real $m \times n$ matrices. The set of $n \times n$ symmetric positive definite matrices is denoted by $\S_+(n)$, and the identity matrix is denoted by $I_n \in \R^{n\times n}$. 
	For any matrix $K \in \S_+(n)$, we define the $K$-weighted inner product of two vectors $x,y \in \R^n$ by $\langle x , y \rangle_K = x^\top K y$ and the associated norm as
	$|x|_K := \sqrt{x^\top K x}$.
	For $K = I_n$, they reduce to the standard Euclidean inner product and norm,
	\begin{align}
    \langle x, y \rangle &:= x^\top y  & |x| := \sqrt{x^\top x}.
	\end{align}
	Given $a \in \R^3$, let $a^\times$ denote the skew-symmetric matrix associated with the vector (cross) product,
	which satisfies $a^\times b = a \times b$ for all $ a, b \in \R^3$. 
    
For matrices $A, B \in \R^{n \times n}$, the Euclidean matrix inner product and associated Frobenius norm are defined as
\begin{equation}
	\langle A, B \rangle := \tr(A^\top B),
	\qquad
	|A| := \sqrt{\langle A, A \rangle}.
	\label{eq:matrix_inner_product}
\end{equation}
Any square matrix $A \in \R^{3 \times 3}$ admits the decomposition
$A = \mathbb{P}_{\mathrm{s}}(A) + \mathbb{P}_{\mathrm{a}}(A)$,
where the symmetric and skew-symmetric components are defined by
$$
\mathbb{P}_{\mathrm{s}}(A) := \tfrac{1}{2}(A + A^\top), \quad
\mathbb{P}_{\mathrm{a}}(A) := \tfrac{1}{2}(A - A^\top).
$$

\subsection{Square-integrable functions on embedded manifolds}
Given a smooth manifold $\calM$, $\tT_{\mathbf{x}}\calM$ denotes the tangent space at $\mathbf{x} \in \calM$, and $\mathfrak{X}(\calM)$ the set of smooth vector fields on $\calM$.
Given $f : \calM \rightarrow \calN$ and $g : \calN \rightarrow \calN'$, their composition is denoted by $g \circ f : \calM \rightarrow \calN'$, where $(g \circ f)(\mathbf{x}) = g(f(\mathbf{x}))$.

Let $\mathbb{S}^2 := \{\mathbf{x} \in \R^{3} \mid \vert \mathbf{x} \vert = 1\} \subset \mathbb{R}^3$ be the 2-dimensional unit sphere with the Riemannian metric induced by the Euclidean inner product in $\mathbb{R}^3$. For any point $\mathbf{x} \in \mathbb{S}^2$, the tangent space is $\tT_{\mathbf{x}}\S^2 := \{ u \in \R^3 \mid \langle u, \mathbf{x}\rangle = 0 \}$.
A convenient choice of local coordinates on $\S^2$ is given by the spherical parametrization $(\theta,\phi) \in [0,\pi] \times [0,2\pi)$, where $ \mathbf{x}(\theta,\phi) = ( \sin\theta \cos\phi, \, \sin\theta \sin\phi, \,  \cos\theta )^\top$.
In these coordinates, the associated area element (Riemannian volume form) is $dV_g = \sin\theta \, d\theta \, d\phi$,
which is the standard surface area element on $\S^2$. 
For any $\mathbf{x} \in \S^2$, the orthogonal projection onto $\tT_{\mathbf{x}}\mathbb{S}^2$ is defined by
$$\pi_{\mathbf{x}} := I_3 - \mathbf{x} \mathbf{x}^\top. $$

A square-integrable function on $\S^2$ is a real-valued map $f: \S^2 \to \R$ such that
\begin{align*} 
	\Vert f \Vert^2 := \int_{\S^2} \vert f(\mathbf{x}) \vert^2 dV_g
\end{align*}
exists and is finite.
The space of all square-integrable functions on $\S^2$ is denoted $\L^2(\S^2)$, and is a Hilbert space; that is, a real vector space equipped with an inner product defined by
\begin{align*}
	\langle f, h \rangle := \int_{\S^2} f(\mathbf{x}) h(\mathbf{x})  dV_g, \quad \forall f,h \in \L^2(\S^2).
\end{align*}

We denote by $\nabla$ the Levi-Civita connection on $\S^2$.
The covariant derivative of a tangent vector field $X \in \mathfrak{X}(\S^2)$ along $u \in \tT_{\mathbf{x}}\S^2$ is defined by projecting the usual Euclidean derivative to the tangent space $\nabla_u X := \pi_{\mathbf{x}} \big(\tD \tilde{X}(\mathbf{x})[u]\big)$,
where $\tilde{X}$ is any smooth extension of $X$ to a neighborhood of $\mathbf{x}$ in $\R^3$.

For a smooth function $f:\S^2 \to \R$, the covariant derivative along $u \in \tT_{\mathbf{x}}\S^2$ coincides with the directional derivative
$$
\nabla_u f = \tD f(\mathbf{x})[u] = \frac{d}{dt}\Big|_{t=0} f(\gamma(t)),
$$
for any smooth curve $\gamma$ with $\gamma(0) = \mathbf{x}$, $\dot{\gamma}(0) = u$.  
The gradient $\nabla f$ satisfies
$$
\nabla_u f = \langle \nabla f(\mathbf{x}), u \rangle, \quad \forall u \in \tT_{\mathbf{x}}\S^2,
$$
and is given by $\nabla f(\mathbf{x}) = \pi_{\mathbf{x}} \nabla_{\R^3} f(\mathbf{x})$.  
The Hessian is
$$
\nabla^2 f(\mathbf{x})[u,v] := \langle \nabla_u \nabla f, v \rangle, \quad \forall u,v \in \tT_{\mathbf{x}}\S^2.
$$

\begin{definition}[Weak derivative on $\S^2$, \citep{chan2024meyers}] 
A function $f: \S^2 \rightarrow \R$ is said to admit a \emph{weak $k$-th order covariant derivative}, $k \geq 1$, if there exists a covariant $k$-tensor field $\tilde{\nabla}^k f$ such that
$$ 
	\int_{\S^2} f(\mathbf{x}) \nabla^k \phi(\mathbf{x}) d V_g = (-1)^k \int_{\S^2} \left\langle \tilde{\nabla}^k f(\mathbf{x}), \phi(\mathbf{x}) \right\rangle d V_g
$$
for all smooth, compactly supported $k$-tensor fields $\phi$ on $\mathbb{S}^2$.
In particular, $\tilde{\nabla} f$ denotes the weak gradient and $\tilde{\nabla}^2 f$ is the weak Hessian of $f$. 
\end{definition}

\begin{definition}[Sobolev space, \citep{hebey1996sobolev}]
The Sobolev space $W^{k,2}(\S^2)$ is defined as  
\begin{align*}
	W^{k,2}(\S^2) := \{ f : \S^2 \to \R \mid f, \tilde{\nabla}^l f \in \L^2(\S^2), \forall l = 1,\dots,k   \}.
\end{align*}
This space consists of all functions in $\L^2(\S^2)$ whose weak derivatives up to order $k$ are also square-integrable.
\end{definition}

When $f$ is smooth on an open set in $\S^2$, the weak derivatives coincide with the classical ones almost everywhere on this set, i.e., $\tilde{\nabla} f = \nabla f$, $\tilde{\nabla}^2 f = \nabla^2 f$. 
The weak derivatives, however, are still well-defined even if $f$ fails to be differentiable on a set of measure-zero in $\S^2$ (e.g., at isolated points or along curves). 

\subsection{Special Linear Group $\SL(3)$}

For a detailed introduction to matrix Lie groups, the reader is referred to \citep{baker2003matrix}.

A \emph{matrix Lie group} $\grpG$ is a subgroup of the general linear group $\GL(n) := \{ A \in \mathbb{R}^{n \times n} \mid \det(A) \neq 0 \}$ closed under matrix multiplication and inversion.
Its Lie algebra is the linear space identified with the tangent space at the identity,
$\mathfrak{g} := \tT_{I_n}\grpG \subset \mathbb{R}^{n\times n}$.
The exponential map $\exp: \mathfrak{g} \to \grpG$ defines a local diffeomorphism from a neighbourhood of $0 \in \mathfrak{g}$ to a neighbourhood of $I_n \in \grpG$.

The inner product \eqref{eq:matrix_inner_product} naturally induces a right-invariant Riemannian metric on $\grpG$ via
$$
\langle U_1 X, U_2 X \rangle_X := \langle U_1, U_2 \rangle,
\qquad U_1, U_2 \in \gothg,\; X \in \grpG.
$$
The special orthogonal group $\SO(3)$ of three-dimensional rotations and its Lie algebra $\so(3)$ are defined by
\begin{align*}
	\SO(3) &:= \{ R \in \R^{3 \times 3} \mid \det(R) = 1, RR^\top = R^\top R = I_3 \}, \\
	\so(3) &:= \{\Omega^\times \in \R^{3\times 3} \mid \Omega \in \R^3\}.
\end{align*}
The special linear group $\SL(3)$ of matrices with unit determinant and its Lie algebra $\gothsl(3)$ are defined by
\begin{align*}
	\SL(3) &:= \{ H \in \R^{3 \times 3} \mid \det(H) = 1 \}, \\
	\gothsl(3) &:= \{ U \in \R^{3\times 3} \mid \text{tr}(U) = 0 \}.
\end{align*}
The projection $\pr_{\SL(3)}: \R^{3 \times 3} \rightarrow \SL(3)$ for any non-singular matrix $H \in \R^{3 \times 3}$ is defined as
\begin{equation} \label{eq:SL3_proj}
	\pr_{\SL(3)}(H) := \det(H)^{-\frac{1}{3}}H \in \SL(3).
\end{equation} 
Let $\{e_1,e_2,e_3\}$ denote the canonical basis of $\mathbb{R}^3$, and let $\{B_j\}_{j=1}^8$ be the following orthonormal basis of $\mathfrak{sl}(3)$ with respect to the Frobenius inner product:
\begin{align*}
B_{1} &= \tfrac{1}{\sqrt{2}} \big( e_{1}e_{1}^{\top} - e_{2}e_{2}^{\top} \big), &
B_{2} &= \tfrac{1}{\sqrt{2}} \big( e_{1}e_{2}^{\top} + e_{2}e_{1}^{\top} \big), \\
B_{3} &= \tfrac{1}{\sqrt{2}} \big( e_{1}e_{3}^{\top} + e_{3}e_{1}^{\top} \big), &
B_{4} &= \tfrac{1}{\sqrt{2}} \big( e_{2}e_{3}^{\top} + e_{3}e_{2}^{\top} \big), \\
B_{5} &= \tfrac{1}{\sqrt{2}} \big( e_{1}e_{2}^{\top} - e_{2}e_{1}^{\top} \big), &
B_{6} &= \tfrac{1}{\sqrt{2}} \big( e_{1}e_{3}^{\top} - e_{3}e_{1}^{\top} \big), \\
B_{7} &= \tfrac{1}{\sqrt{2}} \big( e_{2}e_{3}^{\top} - e_{3}e_{2}^{\top} \big), &
B_{8} &= \tfrac{1}{\sqrt{6}} \big( e_{1}e_{1}^{\top} + e_{2}e_{2}^{\top} - 2e_{3}e_{3}^{\top} \big).
\end{align*}
The \emph{wedge} operator $(\cdot)^\wedge : \mathbb{R}^8 \to \mathfrak{sl}(3)$ is the linear isomorphism defined by
$v^\wedge := \sum_{j=1}^8 v_j B_j$, for $v \in \mathbb{R}^8$.
The \emph{vee} $(\cdot)^\vee : \mathfrak{sl}(3) \to \mathbb{R}^8$ denotes its inverse, given by $V^\vee := \big( \langle V, B_1 \rangle, \dots, \langle V, B_8 \rangle \big)^\top$, $V \in \mathfrak{sl}(3)$.

	A right action of $\SL(3)$ on the sphere $\S^2$ is a smooth map $\rho : \SL(3) \times \S^2 \rightarrow \S^2$ that satisfies the \textit{identity} and \textit{compatibility} properties:
    \begin{align*}
        \rho(I_3, \mathbf{y}) &= \mathbf{y}, &
        \rho\left(G, \rho(H, \mathbf{y}) \right) &= \rho\left( H G, \mathbf{y} \right),
    \end{align*}
    for any $H, G\in \SL(3)$, $\mathbf{y} \in \S^2$. 
	The map $\rho_H : \S^2 \rightarrow \S^2$, for any fixed $H \in \SL(3)$, is a diffeomorphism $\rho_H(\mathbf{y}) := \rho(H, \mathbf{y})$. Likewise, for any fixed $\mathbf{y} \in \S^2$, $\rho_{\mathbf{y}}: \SL(3) \rightarrow \S^2$ defines a smooth projection $\rho_{\mathbf{y}}(H) := \rho(H, \mathbf{y})$.

	\subsection{ Homographies as elements of the special linear group $\SL(3)$}
	
	Consider a camera moving in space while observing a stationary planar scene. As it moves, the camera captures images of the scene from different viewpoints. 
    Let $\{ \mathring{\calC} \}$ denote the reference camera frame, associated with the viewpoint at a fixed initial time, and let $\{ \calC_t \}$ denote the current camera frame at time $t$.
	
    Let $R \in \SO(3)$ and $\xi \in \R^3$ denote the orientation and position, respectively, of the frame $\{ \mathcal{C}_t \}$ with respect to the reference frame $\{ \mathring{\mathcal{C}} \}$.
    Let $d$ denote the distance from the origin of $\{ \mathcal{C}_t \}$ to the planar scene and $\eta \in \mathbb{S}^2$  the normal vector pointing to the scene expressed in $\{ \mathcal{C}_t \}$.
    The plane is defined by 
	$\mathbf{\Pi} := \{  P \in \R^3 \mid \eta^\top P - d = 0 \}$.
	Let $\mathring{P} \in \{ \mathring{\mathcal{C}} \}$ and $P \in \{ \mathcal{C}_t \}$ be the 3D coordinate vectors of a point belonging to the scene, related by
	\begin{equation} \label{eqn:mp}
		\mathring{P} = R P + \xi.
	\end{equation}
Under the pinhole model \citep{ma2004invitation}, the intrinsic camera parameters, such as the focal length and the principal point, are captured by a calibration matrix $K \in \R^{3\times 3}$, such that the projection of a point $\mathring{P}$ (resp. $P$) onto the image plane is $\mathring{\zeta} = K \mathring{P}$ (resp. $\zeta = K P$).
The image homography $H_{im} \in \R^{3 \times 3}$ that maps pixel coordinates from $\{\calC_t\}$ to $\{\mathring{\calC}\}$ is given by
$$ 
H_{im} := \gamma K \left( R + \frac{\xi \eta^\top}{d} \right) K^{-1},
$$
where $\gamma > 0$ is an unknown scale factor. The corresponding \textit{Euclidean homography} $H \in \R^{3\times 3}$ is 
\begin{equation}
    H := K^{-1} H_{im} K = \gamma \left( R + \frac{\xi \eta^\top}{d}  \right).
\end{equation}  
The scale factor $\gamma$ can be chosen to ensure that $\det(H) = 1$, i.e., to set $H = \pr_{\SL(3)}( R + \xi \eta^\top/d )$. This corresponds to taking $\gamma := \det(R + \xi \eta^\top/d )^{-\frac{1}{3}} = (d/\mathring{d})^{\frac{1}{3}}$ (see \citep{mahony2012nonlinear}).
For the remainder of this paper, all homographies $H$ are taken to be appropriately scaled to satisfy $H \in \SL(3)$.

    We model image maps as real-valued functions defined on the sphere $I: \S^2 \rightarrow [0, 1]$. 
    That is, instead of using pixel coordinates on the image plane, intensity values are defined at points represented by the ray directions $\mathbf{x} \in \S^2$, corresponding to a spherical perspective projection model. 
    We will use this formulation in the remainder of the paper, as working on the sphere adds to the clarity of the mathematical presentation.

    \begin{proposition}[\citep{hua2020nonlinear}] \label{prop1}
		The mapping
        $\bm{\rho}: \SL(3) \times \S^2 \rightarrow \S^2$ 
        given by
		\begin{equation} \label{eq:rho_action}
			\bm{\rho}(H, \mathbf{x}) := \frac{H^{-1} \mathbf{x}}{|H^{-1} \mathbf{x}|},
		\end{equation}
		is a right group action of $\SL(3)$ on $\S^2$. 
	\end{proposition}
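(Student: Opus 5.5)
We verify directly the three requirements in the definition of a right action given above: that $\bm{\rho}$ is well defined and smooth as a map into $\S^2$, the identity property, and the compatibility property $\bm{\rho}(G,\bm{\rho}(H,\mathbf{x})) = \bm{\rho}(HG,\mathbf{x})$.

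\emph{Well-definedness and smoothness.} Fix $H \in \SL(3)$. Since $\det(H)=1$, $H^{-1}$ exists and is injective. Hence $H^{-1}\mathbf{x} \neq 0$ for every $\mathbf{x}\in\S^2$, because $\mathbf{x}\neq 0$. The quotient $H^{-1}\mathbf{x}/|H^{-1}\mathbf{x}|$ is therefore a well-defined unit vector, and so lies in $\S^2$.

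Smoothness follows from three facts:
\begin{itemize}
\item matrix inversion is smooth on $\GL(3)$;
\item $(H,\mathbf{x}) \mapsto H^{-1}\mathbf{x}$ is bilinear in $(H^{-1},\mathbf{x})$;
\item the normalisation $v \mapsto v/|v|$ is smooth on $\R^3\setminus\{0\}$.
\end{itemize}
Thus $\bm{\rho}$ is the restriction to the embedded submanifold $\SL(3)\times\S^2$ of a smooth map on an open subset of $\R^{3\times3}\times\R^3$, and it is smooth.

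\emph{Identity.} We have $\bm{\rho}(I_3,\mathbf{x}) = \mathbf{x}/|\mathbf{x}| = \mathbf{x}$, since $|\mathbf{x}|=1$.

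\emph{Compatibility.} This is the only step with content, and the key observation is the positive homogeneity of the normalisation: for any $v\neq 0$ and $\lambda>0$, $\lambda v/|\lambda v| = v/|v|$. Set $y := \bm{\rho}(H,\mathbf{x}) = H^{-1}\mathbf{x}/|H^{-1}\mathbf{x}|$. Then
\begin{equation*}
\bm{\rho}(G,y) = \frac{G^{-1}y}{|G^{-1}y|}
= \frac{|H^{-1}\mathbf{x}|^{-1}\, G^{-1}H^{-1}\mathbf{x}}{|H^{-1}\mathbf{x}|^{-1}\, |G^{-1}H^{-1}\mathbf{x}|}
= \frac{(HG)^{-1}\mathbf{x}}{|(HG)^{-1}\mathbf{x}|}
= \bm{\rho}(HG,\mathbf{x}).
\end{equation*}
Here we pulled the positive scalar $|H^{-1}\mathbf{x}|^{-1}$ through the linear map $G^{-1}$ and the norm, cancelled it, and used $G^{-1}H^{-1} = (HG)^{-1}$.

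The product order $HG$, rather than $GH$, is precisely what makes $\bm{\rho}$ a \emph{right} action, and it comes from the inverse in the definition. We also note that $HG \in \SL(3)$, so the right-hand side is again of the form \eqref{eq:rho_action}.

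Finally, each $\bm{\rho}_H$ is a diffeomorphism with inverse $\bm{\rho}_{H^{-1}}$, by compatibility and the identity property; this is consistent with the remark following the definition of a right action. $\qed$
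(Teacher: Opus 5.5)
Your proof is correct and complete. Well-definedness follows from injectivity of $H^{-1}$, smoothness from restricting a smooth map on $\GL(3)\times(\R^3\setminus\{0\})$, and compatibility from the positive homogeneity of the normalisation together with $G^{-1}H^{-1}=(HG)^{-1}$. The paper gives no proof of its own here and simply cites \citep{hua2020nonlinear}, so your direct verification is the standard argument that citation stands in for. It also supplies the diffeomorphism property of $\bm{\rho}_H$ that the paper later relies on in the proof of Lemma~\ref{lemma1}.
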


    \begin{assumption} \label{assump_const_bright}
    The observed planar surface is Lambertian, i.e., it reflects light uniformly in all directions.
    \end{assumption}
    Assumption \ref{assump_const_bright} corresponds to the classical \textit{brightness constancy} constraint used in direct image alignment methods \citep{irani1999direct}. 
	It postulates that the radiance of a given scene point remains constant across different viewpoints. 
	As a consequence, if $\mathbf{x}$ and $\bm{\rho}_H(\mathbf{x})$ denote corresponding bearing vectors in two images related by a homography action $\bm{\rho}_H$, then the measured intensities along these directions satisfy $I_1(\mathbf{x}) = I_2(\bm{\rho}_H(\mathbf{x}))$.

    \begin{assumption} \label{assump_L2}
    All image maps $I$ are assumed to be square-integrable and twice weakly differentiable in $\L^2(\S^2)$, i.e., they belong to the Sobolev space $W^{2,2}(\S^2)$.
    \end{assumption}

	The assumption of square-integrability of image maps is justified by the fact that real-world images have bounded pixel intensity values and are defined on the compact domain $\S^2$.
	Modelling image maps in the Sobolev space $W^{2,2}(\S^2)$ further accounts for non-smooth images, such as those with edges or intensity discontinuities, by only requiring the existence of weak derivatives rather than strict pointwise differentiability of the image maps.

Under Assumptions \ref{assump_const_bright} and \ref{assump_L2}, the action $\bm{\rho}$ defined in \eqref{eq:rho_action} induces a natural right group action of $\SL(3)$ on image maps in $W^{2,2}(\S^2)$, as established in the following lemma.
\begin{lemma} \label{lemma1}		
The mapping $\bm{\mu}: \SL(3) \times W^{2,2}(\S^2) \to W^{2,2}(\S^2)$, defined by 
	\begin{equation} \label{eq:mu_action}
		\bm{\mu}(H,I) := I \circ \bm{\rho}_{H^{-1}},
	\end{equation}
	is a right group action.
\end{lemma}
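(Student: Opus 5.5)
The plan is to check three things in turn: that $\bm{\mu}$ actually takes values in $W^{2,2}(\S^2)$, that it satisfies the identity property, and that it satisfies the compatibility property of a right action. The last two will follow directly from Proposition~\ref{prop1}. The first is the analytic part and needs a short argument about composing Sobolev functions with diffeomorphisms.

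\textbf{Algebraic properties.} By Proposition~\ref{prop1}, $\bm{\rho}$ is a right action, so $\bm{\rho}(G,\bm{\rho}(H,\mathbf{y})) = \bm{\rho}(HG,\mathbf{y})$. Explicitly, $\bm{\rho}_{H^{-1}}(\mathbf{x}) = H\mathbf{x}/|H\mathbf{x}|$.
\begin{itemize}
\item Identity: $\bm{\rho}_{I_3} = \mathrm{id}_{\S^2}$, hence $\bm{\mu}(I_3,I) = I$.
\item Compatibility: for $H,G \in \SL(3)$ and $\mathbf{x}\in\S^2$, I would compute
\begin{equation*}
\bm{\mu}(G,\bm{\mu}(H,I))(\mathbf{x}) = \bm{\mu}(H,I)\big(\bm{\rho}_{G^{-1}}(\mathbf{x})\big) = I\big(\bm{\rho}(H^{-1},\bm{\rho}(G^{-1},\mathbf{x}))\big) = I\big(\bm{\rho}(G^{-1}H^{-1},\mathbf{x})\big).
\end{equation*}
Since $G^{-1}H^{-1} = (HG)^{-1}$, the right-hand side equals $\bm{\mu}(HG,I)(\mathbf{x})$, which is exactly the right-action law.
\end{itemize}
The same identity shows that $\bm{\rho}_{H^{-1}}$ is a smooth diffeomorphism of $\S^2$ with inverse $\bm{\rho}_H$.

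\textbf{Well-definedness (main obstacle).} I must show that $I\circ\varphi \in W^{2,2}(\S^2)$ whenever $I \in W^{2,2}(\S^2)$ and $\varphi := \bm{\rho}_{H^{-1}}$ is a smooth diffeomorphism of the compact manifold $\S^2$. The steps are as follows.
\begin{enumerate}
\item \emph{Uniform bounds.} Because $\S^2$ is compact and $\varphi$, $\varphi^{-1}$ are smooth, $\mathrm{D}\varphi$, $\nabla\mathrm{D}\varphi$, the Jacobian determinant $J_\varphi$, and $J_\varphi^{-1}$ are all uniformly bounded.
\item \emph{$\L^2$ part.} The change of variables formula gives $\|I\circ\varphi\|^2 \le \sup|J_{\varphi^{-1}}|\,\|I\|^2 < \infty$.
\item \emph{Approximation.} I would then use density of smooth functions in $W^{2,2}(\S^2)$, a standard result on compact manifolds as in \citep{hebey1996sobolev}. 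Take smooth $I_n \to I$ in $W^{2,2}$.
\item \emph{Chain rule for smooth functions.} For smooth $I_n$,
\begin{align*}
\nabla (I_n\circ\varphi)(\mathbf{x})[u] &= \nabla I_n(\varphi(\mathbf{x}))\big[\mathrm{D}\varphi(\mathbf{x})u\big],\\
\nabla^2(I_n\circ\varphi)[u,v] &= \nabla^2 I_n\big[\mathrm{D}\varphi\, u,\mathrm{D}\varphi\, v\big] + \nabla I_n\big[\nabla_u\mathrm{D}\varphi\, v\big].
\end{align*}
\item \emph{Norm estimate.} Combining these formulas with the bounds from step 1 and a change of variables yields $\|I_n\circ\varphi\|_{W^{2,2}} \le C_H\|I_n\|_{W^{2,2}}$, with $C_H$ independent of $n$.
\end{enumerate}

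\textbf{Passing to the limit.} By the estimate in step 5, $(I_n\circ\varphi)$ is Cauchy in $W^{2,2}$, and its $\L^2$ limit is $I\circ\varphi$. Passing to the limit in the integration-by-parts identity defining weak derivatives identifies the limits of the derivatives as $\tilde\nabla(I\circ\varphi)$ and $\tilde\nabla^2(I\circ\varphi)$, and these are square-integrable. Hence $\bm{\mu}(H,I) \in W^{2,2}(\S^2)$, so $\bm{\mu}$ is well defined, and together with the algebraic properties it is a right group action.
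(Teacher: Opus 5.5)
Your proposal is correct and follows the same route as the paper. You check the identity and compatibility properties exactly as the paper does, via Proposition~\ref{prop1}, and you obtain closure in $W^{2,2}(\S^2)$ from the same change-of-variables bounds and chain-rule formulas for the first and second covariant derivatives; the one addition is that you justify the weak chain rule by smooth approximation and a limiting argument, a step the paper simply asserts.
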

\begin{proof}
	Checking the identity property, for any $I \in W^{2,2}(\S^2)$ and $\mathbf{x} \in \S^2$,
	\begin{align*}
		\bm{\mu}(I_3, I)(\mathbf{x}) &= I \circ \bm{\rho}_{I_3}(\mathbf{x}) = I(\mathbf{x}).
	\end{align*}
	Now checking the composition property, for any $I \in W^{2,2}(\S^2)$, any $\mathbf{x} \in \S^2$, and any $G, H \in \mathbf{SL}(3)$, one has
	\begin{align*}
		\bm{\mu}(G, \bm{\mu}(H, I)) &= \bm{\mu}(G, I \circ \bm{\rho}_{H^{-1}}), \\
		&= I \circ \bm{\rho}_{H^{-1}} \circ \bm{\rho}_{G^{-1}}, \\
		&= I \circ \bm{\rho}_{G^{-1} H^{-1}}, \\
		&= I \circ \bm{\rho}_{(HG)^{-1}} = \bm{\mu}(HG, I).
	\end{align*}
Finally, we show that $W^{2,2}(\S^2)$ is closed under the action $\bm{\mu}$, i.e.,
that $\bm{\mu}(H,I) \in W^{2,2}(\S^2)$ for all $I \in W^{2,2}(\S^2)$ and fixed $H \in \SL(3)$.
Recall that $I \in W^{2,2}(\S^2)$ implies $I \in \L^2(\S^2)$ with
$\tilde{\nabla} I, \tilde{\nabla}^2 I \in \L^2(\S^2)$.
We begin by showing that $\bm{\mu}(H,I) \in \L^2(\S^2)$.
Since $H \in \SL(3)$ is fixed, the map $\bm{\rho}_H : \S^2 \to \S^2$ is a smooth diffeomorphism.
By definition, $ \bm{\mu}(H,I)(\mathbf{x}) = I(\bm{\rho}_{H^{-1}}(\mathbf{x}))$.
Using the change of variables $\mathbf{y} = \bm{\rho}_{H^{-1}}(\mathbf{x})$, we obtain
\begin{align*}
\|\bm{\mu}(H,I)\|^2
&= \int_{\S^2} I(\bm{\rho}_{H^{-1}}(\mathbf{x}))^2 \, dV_g(\mathbf{x}) \\
&= \int_{\S^2} I(\mathbf{y})^2 \, |\det(\tD \bm{\rho}_H(\mathbf{y}))| \, dV_g(\mathbf{y}).
\end{align*}
The differential of $\bm{\rho}_H$ is given by
\begin{equation}\label{eq:Drho_recall}
\tD \bm{\rho}_H(\mathbf{y})
= \frac{1}{|H^{-1}\mathbf{y}|}\,\pi_{\bm{\rho}_H(\mathbf{y})} H^{-1}.
\end{equation}
For any orthonormal basis $(v_1,v_2)$ of $T_{\mathbf{y}}\S^2$, the Jacobian determinant satisfies
\[
|\det \tD \bm{\rho}_H(\mathbf{y})|
= |(\tD \bm{\rho}_H(\mathbf{y}) v_1)
\times (\tD \bm{\rho}_H(\mathbf{y}) v_2)|
= \frac{|H^\top \mathbf{y}|}{|H^{-1}\mathbf{y}|^3}.
\]
This expression is continuous and strictly positive on the compact manifold $\S^2$.
Therefore, there exists a constant $M_0(H)>0$ such that $ |\det(\tD \bm{\rho}_H(\mathbf{y}))| \le M_0(H)$, for all $\mathbf{y} \in \S^2$. It follows that
\[
\|\bm{\mu}(H,I)\|^2
\le M_0(H)\|I\|^2 < \infty,
\]
and hence $\bm{\mu}(H,I) \in \L^2(\S^2)$.

We now consider the first covariant derivative.
We have $ \tilde{\nabla} \bm{\mu}(H,I)(\mathbf{x})
= \tD \bm{\rho}_{H^{-1}}(\mathbf{x})^\top
(\tilde{\nabla} I)(\bm{\rho}_{H^{-1}}(\mathbf{x}))$.
Using again the change of variables $\mathbf{y} = \bm{\rho}_{H^{-1}}(\mathbf{x})$, we obtain
\begin{align*}
&\|\tilde{\nabla} \bm{\mu}(H,I)\|^2 \\
&= \int_{\S^2}
\big|
\tD \bm{\rho}_{H^{-1}}(\bm{\rho}_H(\mathbf{y}))^\top
(\tilde{\nabla} I)(\mathbf{y})
\big|^2
|\det(\tD \bm{\rho}_H(\mathbf{y}))|
\, dV_g(\mathbf{y}).
\end{align*}
Since $\bm{\rho}_H$ is a smooth diffeomorphism and $H$ is fixed, both $\tD \bm{\rho}_{H^{-1}}$ and $|\det(\tD \bm{\rho}_H)|$ are continuous and bounded on $\S^2$.
Therefore, there exists a constant $M_1(H)>0$ such that
$$
\|\tilde{\nabla} \bm{\mu}(H,I)\|^2
\le M_1(H)\|\tilde{\nabla} I\|^2 < \infty.
$$
Finally, we consider the second covariant derivative.
A direct computation yields
\begin{align*}
\tilde{\nabla}^2 \bm{\mu}(H,I)(\mathbf x)[u,v] 
&= \tilde{\nabla}_u \big(\tilde{\nabla} \bm{\mu}(H,I) \big)[v] \\
&= \tD \bm{\rho}_{H^{-1}}(\mathbf x)^\top \, (\tilde{\nabla}^2 I)(\bm{\rho}_{H^{-1}}(\mathbf x)) \, \tD \bm{\rho}_{H^{-1}}(\mathbf x)[u,v] \\
&\quad + \sum_{i=1}^2 (\tilde{\nabla} I)(\bm{\rho}_{H^{-1}}(\mathbf x)) \cdot (\tilde{\nabla}_u \tD \bm{\rho}_{H^{-1}}(\mathbf x)[v]) ,
\end{align*}
Since $\bm{\rho}_{H^{-1}}$ is smooth and $\S^2$ is compact,
both $\tD \bm{\rho}_{H^{-1}}$ and its covariant derivative
$\tilde{\nabla} \tD \bm{\rho}_{H^{-1}}$ are bounded.
Consequently, there exists a constant $M_2(H)>0$ such that
\[
\|\tilde{\nabla}^2 \bm{\mu}(H,I)\|^2
\le M_2(H)\big(\|\tilde{\nabla}^2 I\|^2 + \|\tilde{\nabla} I\|^2\big) < \infty.
\]
Since $I \in W^{2,2}(\S^2)$ implies that
$I$, $\tilde{\nabla} I$, and $\tilde{\nabla}^2 I$ all belong to $\L^2(\S^2)$,
we conclude that $\bm{\mu}(H,I) \in W^{2,2}(\S^2)$.
Therefore, $W^{2,2}(\S^2)$ is closed under the action of $\bm{\mu}$.
This completes the proof.

\end{proof}

\begin{remark}
Despite $\bm{\mu}$ being defined on $W^{2,2}(\S^2)$, the output space of interest is the orbit of a fixed image map $I$ under homography transformations, given by
$$ \mathrm{orb}_{\bm{\mu}}(I) := \{  I \circ \bm{\rho}_{H^{-1}} \mid H \in \SL(3)\} \subset W^{2,2}(\S^2). $$
\end{remark}

\section{Application to homography estimation} \label{sec:observability}
Consider a camera that is moving according to some measured linear and angular velocity while capturing images of a planar scene.
Let $H(t) \in \SL(3)$ denote the homography that maps the current image to a fixed reference image, with left-invariant kinematics given by 
\begin{equation} \label{eq:sys}
	\dot{H} = HU, 
\end{equation}
	where $U \in \gothsl(3)$ is the group velocity derived from the camera motion.
	We consider the case where the group velocity is fully measured.
\begin{assumption}
	The group velocity $U \in \gothsl(3)$ is available from measurements.
\end{assumption}

\subsection{Homography Observer Design}

	Let $\hat{H} \in \SL(3)$ denote the estimate of $H$, and define the dynamics of the proposed observer to be
	\begin{align} \label{eq:observer}
		\dot{\hat{H}} &= \hat{H}U + \Delta \hat{H}, &
		\hat{H}(0) &= I_3,
	\end{align}
	where $\Delta \in \gothsl(3)$ denotes the correction term that remains to be designed.
	
	Define the group error $E := \hat{H}H^{-1}$, its dynamics are
	\begin{equation} \label{eq:error}
		\begin{split}
			\dot{E} &=  \hat{H}UH^{-1} + \Delta H\hat{H}^{-1} - \hat{H}UH^{-1} = \Delta E.
		\end{split}
	\end{equation}
	The problem of the observer design is to identify a correction term $\Delta \in \gothsl(3)$ that ensures that the group error $E$ converges to the identity $I_3$, and therefore
	\begin{equation*}
		\hat{H} = \hat{H} H^{-1} H = E H \to I_3 H = H.
	\end{equation*}
In what follows, the correction term $\Delta$ is obtained from the gradient descent direction of a suitable non-degenerate cost function.
This is conceptually similar to the designs based on image feature measurements \citep{hua2019feature,hua2020nonlinear}, with the key difference that here the cost function is constructed directly from raw image intensities.

We define the reference image as a map $\mathring{I} : \mathcal{X} \rightarrow [0,1]$, $\mathring{I} \in W^{2,2}(\mathcal{X})$, where $\mathcal{X} := \mathrm{dom}(\mathring{I}) \subset \S^2$ denotes the fixed reference image domain, corresponding to the projection of a planar surface patch onto the spherical image surface. 
Although $\mathcal{X}$ may represent only a subset of the full image, we refer to $\mathring{I}$ as the \emph{reference image} throughout the estimation problem, since all computations are performed on this fixed domain.
	
Assuming the scene is planar, the current image map can be expressed in terms of the homography $H$ as
	\begin{align*}
		I &:= \bm{\mu}(H, \mathring{I}), \\
		I(\mathbf{x}) &= \bm{\mu}(H, \mathring{I})(\mathbf{x})
		= \mathring{I}(\bm{\rho}(H^{-1},\mathbf{x})).
	\end{align*}
Using a homography estimate $\hat{H}$, we define the \emph{warped image} (or error image) map by exploiting the group action property of $\bm{\mu}$:
	\begin{equation} \label{eq:warped_image}
		I^e := \bm{\mu}(\hat{H}^{-1}, I) = I \circ \bm{\rho}_{\hat{H}} = \mathring{I} \circ \bm{\rho}_{E} = \bm{\mu}(E^{-1}, \mathring{I}).
	\end{equation}
The map $I^e$ therefore provides a direct measure of the estimation error in the image space, such that $I^e = \mathring{I}$ if $E = I_3$ (i.e., $\hat{H} = H$).

\begin{figure}[ht]
    \centering
    \includegraphics[width=0.5\linewidth]{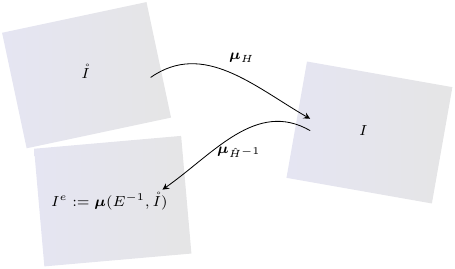}
    \caption{In the image space, the transformation $\bm{\mu}_H$ parameterises image $I$ in terms of the reference image $\mathring{I}$, so the $\bm{\mu}_{\hat{H}^{-1}}$ transformation generates a warped (error) image, which converges to the reference image when $\hat{H} \rightarrow H$.}
    \label{fig:enter-label}
\end{figure}

	Using the estimate $\hat{H}$ and pixel intensity information of the reference and current images, we introduce a \emph{photometric} cost function $\bm{\mathcal{C}} : \SL(3) \times W^{2,2}(\calX) \rightarrow \R_{\geq 0}$ associated with the observer \eqref{eq:observer}, defined as 
	\begin{align}
		\bm{\mathcal{C}}(\hat{H}, I) &:= \frac{1}{2} \Vert I^e - \mathring{I} \Vert^2, \notag \\ 
		&= \frac{1}{2} \int_{\mathcal{X}} \left( \bm{\mu}(\hat{H}^{-1}, I)(\mathbf{x}) - \mathring{I}(\mathbf{x}) \right)^2 d V_g.
		\label{eq:direct_cost_H}
	\end{align}

\begin{remark}
The warped image map \eqref{eq:warped_image} is defined on the time-varying domain 
$\mathrm{dom}(\bm{\mu}(\hat{H}(t)^{-1}, I))$ that depends on the current homography estimate $\hat{H}(t)$.  
To ensure that the cost \eqref{eq:direct_cost} can be evaluated over the entire fixed domain $\calX$, the map $I^e$ is restricted to the intersection
$\mathcal{X}^e := \mathcal{X} \cap \mathrm{dom}\big(\bm{\mu}(\hat{H}(t)^{-1}, I)\big)$,
and extended by zero on $\mathcal{X} \setminus \mathcal{X}^e$. 
Although this extension may introduce a jump discontinuity across the boundary $\partial \mathcal{X}^e$, the cost \eqref{eq:direct_cost} remains well-defined since the image maps are elements of the Sobolev space $W^{2,2}(\mathcal{X})$.
\end{remark}

	Since $I, \mathring{I} \in W^{2,2}(\calX)$, the cost \eqref{eq:direct_cost_H} is well-defined and differentiable in the weak sense.
	Additionally, exploiting the definition of the warped image,
	it can be equivalently expressed in terms of the homography error $E$ as
	\begin{align}
		\bm{\mathcal{C}}(E, \mathring{I}) &= \frac{1}{2} \int_{\mathcal{X}} \left( \bm{\mu}(E^{-1}, \mathring{I})(\mathbf{x}) - \mathring{I}(\mathbf{x}) \right)^2 d V_g.
		\label{eq:direct_cost}
	\end{align}

	More generally, the cost \eqref{eq:direct_cost_H} is \textit{right-invariant} in the sense that  
	$\bm{\calC}(\bm{\phi}(Q,\hat{H}), \bm{\mu}(Q,I)) = \bm{\calC}(\hat{H}, I)$  
	for all $Q \in \SL(3)$.
	This property is particularly convenient because it allows all subsequent computations to be parameterised purely in terms of $E$. Since $\mathring{I}$ is constant, the evolution of $\eqref{eq:direct_cost}$ is  driven fully by the error system \eqref{eq:error}.

	Local convergence of the observer error requires the cost to be non-degenerate. The following lemma provides a necessary condition for $\bm{\calC}$ to satisfy this property.
    \begin{lemma} \label{thm:lemma_non_degeneracy}
		Suppose that the image maps $\mathring{I}$ and $I$ are twice weakly differentiable 
		and let $\tilde{\nabla} \mathring{I}(\mathbf{x})$ denote the reference image map gradient at $\mathbf{x}$. 
		If there exists a measurable subset $\calU \subset \calX$ of strictly positive measure such that
		\begin{align}
			\int_{\calU} \left\langle \tilde{\nabla} \mathring{I}(\mathbf{x}) \mathbf{x}^\top, \Delta \right\rangle^2 dV_g(\mathbf{x}) > 0,
			\label{eq:span_of_gradients}
		\end{align}
		then the cost function \eqref{eq:direct_cost} has an isolated global minimum at $E = I_3$.
	\end{lemma}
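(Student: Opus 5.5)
The plan is a second-order (Taylor) analysis of the cost \eqref{eq:direct_cost} along curves through $E = I_3$. Since $\bm{\calC}(E,\mathring{I}) \geq 0$ and $\bm{\calC}(I_3,\mathring{I}) = 0$ (because $\bm{\mu}(I_3,\mathring{I}) = \mathring{I}$ by Lemma \ref{lemma1}), $E = I_3$ is already a global minimum. What remains is to show it is isolated, which I will get from positive definiteness of the Hessian of $\bm{\calC}$ at $I_3$ in every direction $\Delta \in \gothsl(3)$ satisfying \eqref{eq:span_of_gradients}. Read this way, the hypothesis is understood for all nonzero $\Delta$.

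\textbf{Step 1: the first variation of the warped image.} Take $E(t) = \exp(t\Delta)$, so that $I^e_t(\mathbf{x}) = \mathring{I}(\bm{\rho}(E(t),\mathbf{x}))$ with $\bm{\rho}(E,\mathbf{x}) = E^{-1}\mathbf{x}/|E^{-1}\mathbf{x}|$. Differentiating at $t=0$ gives
\[
\frac{d}{dt}\Big|_{t=0}\bm{\rho}(E(t),\mathbf{x}) = -\pi_{\mathbf{x}}\Delta\mathbf{x}.
\]
By the chain rule for weak derivatives (valid because $\bm{\rho}_E$ is a smooth diffeomorphism, as used in the proof of Lemma \ref{lemma1}), this yields
\[
\frac{d}{dt}\Big|_{t=0} I^e_t(\mathbf{x}) = -\langle \tilde\nabla\mathring{I}(\mathbf{x}), \Delta\mathbf{x}\rangle = -\langle \tilde\nabla\mathring{I}(\mathbf{x})\mathbf{x}^\top, \Delta\rangle,
\]
where the projection can be dropped because $\tilde\nabla\mathring{I}(\mathbf{x}) \in \tT_{\mathbf{x}}\S^2$.

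\textbf{Step 2: the second derivative of the cost.} Write $\bm{\calC}(t) = \tfrac12\int_{\calX}(I^e_t - \mathring{I})^2\,dV_g$. Then
\[
\ddot{\bm{\calC}}(0) = \int_{\calX}(\dot I^e_0)^2\,dV_g + \int_{\calX}(I^e_0-\mathring{I})\,\ddot I^e_0\,dV_g .
\]
The second term vanishes because $I^e_0 = \mathring{I}$. The first derivative $\dot{\bm{\calC}}(0) = \int (I^e_0-\mathring{I})\dot I^e_0$ also vanishes, confirming that $I_3$ is a critical point. Hence
\[
\ddot{\bm{\calC}}(0) = \int_{\calX}\langle\tilde\nabla\mathring{I}(\mathbf{x})\mathbf{x}^\top,\Delta\rangle^2\,dV_g \;\geq\; \int_{\calU}\langle\tilde\nabla\mathring{I}(\mathbf{x})\mathbf{x}^\top,\Delta\rangle^2\,dV_g \;>\; 0 .
\]
This quadratic form in $\Delta$ is the Hessian of $\bm{\calC}$ at $I_3$ with respect to the right-invariant metric. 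If it is positive for all $\Delta \neq 0$, it is positive definite on the eight-dimensional space $\gothsl(3)$, with a minimum eigenvalue $\lambda > 0$ obtained by compactness of the unit sphere in $\gothsl(3)$.

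\textbf{Step 3: from Hessian to isolation.} I use exponential coordinates $E = \exp(\Delta)$ near $I_3$ and a second-order Taylor expansion
\[
\bm{\calC}(\exp\Delta) = \tfrac12\,\mathrm{Hess}[\Delta,\Delta] + o(|\Delta|^2) \;\geq\; \tfrac{\lambda}{4}|\Delta|^2
\]
for $|\Delta|$ small. So $\bm{\calC} > 0$ on a punctured neighbourhood of $I_3$, which makes $I_3$ an isolated zero and hence an isolated global minimum.

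\textbf{Main obstacle.} The hardest step is justifying that $\bm{\calC}$ is $C^2$ near $I_3$ when $\mathring{I}$ is only in $W^{2,2}$ and the domain $\calX^e$ varies. I would handle this by changing variables to $\mathbf{y} = \bm{\rho}_E(\mathbf{x})$, as in Lemma \ref{lemma1}, so that the $t$-dependence moves onto smooth Jacobian factors and the domain. Then I would use density of smooth functions in $W^{2,2}$ together with the uniform bounds $M_0, M_1, M_2$ to differentiate under the integral sign. The boundary terms coming from the moving domain are multiplied by $(I^e-\mathring{I})$, which vanishes at $t=0$, so they do not contribute at second order.
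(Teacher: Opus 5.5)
Your proposal follows the paper's proof. You compute the first variation $-\pi_{\mathbf x}\Delta\mathbf x$ of $\bm{\rho}$ at $I_3$, drop the residual term in the second variation because $I^e=\mathring I$ there, obtain the Hessian $\int_{\calX}\langle\tilde\nabla\mathring I(\mathbf x)\mathbf x^\top,\Delta\rangle^2\,dV_g$, and read the hypothesis for every nonzero $\Delta$ (as the paper's contradiction argument also implicitly requires) to get positive definiteness and an isolated minimum, with your Taylor step and regularity paragraph only making explicit what the paper takes for granted.

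One caution on that regularity paragraph. Under the paper's convention of extending $I^e$ by zero on $\calX\setminus\calX^e$, the moving-domain contribution is $\tfrac12\int_{\calX\setminus\calX^e}\mathring I^2\,dV_g$; this does not carry a factor $I^e-\mathring I$ and is generally of first order in $|\Delta|$. Because it is nonnegative, though, you can discard it and apply your Step 3 expansion to the overlap integral over $\calX^e$, which is all that isolation requires.
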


	\begin{proof}
		It is straightforward to see that $\bm{\mathcal{C}}(I_3, \mathring{I}) = 0$ and is therefore a global minimum.
		To show that this is an isolated minimum, it suffices to show that the Hessian of $\bm{\mathcal{C}}$ at the identity $I_3$ is non-degenerate.
		The first order derivative of $\bm{\mathcal{C}}$ is computed as follows,
		\begin{align*}
			\tD_1\bm{\mathcal{C}}(E, \mathring{I})[\Delta E] 
			&= \int_{\mathcal{X}} \left(I^e(\mathbf{x}) - \mathring{I}(\mathbf{x})\right) \tD(\mathring{I} \circ \bm{\rho}_{\mathbf{x}})(E)[\Delta E]dV_g \\
			&= \int_{\mathcal{X}} \left. \left(I^e(\mathbf{x}) - \mathring{I}(\mathbf{x})\right) \frac{d}{dt} \mathring{I} \circ \bm{\rho}(\exp(t\Delta)E, {\mathbf{x}}) \right|_{t=0} dV_g \\
			&= \int_{\mathcal{X}} \left. \left(I^e(\mathbf{x}) - \mathring{I}(\mathbf{x})\right) \frac{d}{dt} \mathring{I} \circ \bm{\rho}\left(E, \bm{\rho}(\exp(t\Delta), \mathbf{x})\right) \right|_{t=0} dV_g \\
			&= \int_{\mathcal{X}} \left. \left(I^e(\mathbf{x}) - \mathring{I}(\mathbf{x})\right) \frac{d}{dt} I^e \circ \bm{\rho}(\exp(t\Delta), \mathbf{x}) \right|_{t=0}dV_g \\
			&= \int_{\mathcal{X}} \left(I^e(\mathbf{x}) - \mathring{I}(\mathbf{x})\right) \tilde{\tD} I^e(\mathbf{x}) \tD \bm{\rho}_\mathbf{x}(I_3)[\Delta] dV_g. 
		\end{align*}
		where $\tilde{\tD} I^e(\mathbf{x})$ denotes the transpose of the warped image weak gradient at $\mathbf{x}$; that is, $\tilde{\tD} I^e(\mathbf{x}) = \tilde{\nabla} I^e(\mathbf{x})^\top$. 
		The differential $\tD \bm{\rho}_\mathbf{x}(I_3)[\Delta] $ is given by
		\begin{align} \label{eq:d_rho}
			\tD \bm{\rho}_\mathbf{x}(I_3)[\Delta] &= \left. \frac{d}{dt} \right|_{t=0} \bm{\rho}((\exp(t\Delta), \mathbf{x}), \notag \\
			&= \frac{1}{|\mathbf{x}|^2} \left(-\Delta \mathbf{x} |\mathbf{x}| + \frac{\mathbf{x} \mathbf{x}^\top \Delta \mathbf{x}}{|\mathbf{x}|} \right), \notag \\
			&= \left(\mathbf{x} \mathbf{x}^\top - I_3 \right) \Delta \mathbf{x} = - \pi_{\mathbf{x}} \Delta \mathbf{x},
		\end{align}
		and since $ \tilde{\tD} \mathring{I}(\mathbf{x})\pi_\mathbf{x} = \tilde{\tD} \mathring{I}(\mathbf{x})$, it follows
		\begin{align*}
			\tD_1\bm{\mathcal{C}}(E, \mathring{I})[\Delta E] &= - \int_{\mathcal{X}} \left( I^e(\mathbf{x}) - \mathring{I}(\mathbf{x}) \right) \tilde{\tD} I^e(\mathbf{x}) \Delta \mathbf{x} dV_g.
		\end{align*}
		Now, the second order derivative of $\bm{\mathcal{C}}$ about $I_3$ is
		\begin{align*}
		\tD_1^2\bm{\mathcal{C}}(I_3, \mathring{I})[\Delta, \Delta] 
		    &= - \int_{\mathcal{X}} \left( \tD(\mathring{I} \circ \bm{\rho}_{\mathbf{x}})(E)[\Delta E] \right) \tilde{\tD} I^e(\mathbf{x}) \Delta \mathbf{x} dV_g \\ & \qquad - \left. \int_{\mathcal{X}} \left( I^e(\mathbf{x}) - \mathring{I}(\mathbf{x}) \right) \ddt(\tilde{\tD} I^e(\mathbf{x}) \Delta \mathbf{x}) dV_g \right|_{E=I_3}, \\ 
			&= - \int_{\mathcal{X}} \left( \tilde{\tD} \mathring{I}(\mathbf{x}) \tD \bm{\rho}_\mathbf{x}(I_3)[\Delta] \right) \tilde{\tD} \mathring{I}(\mathbf{x}) \Delta \mathbf{x} dV_g, \\ 
			&= \int_{\mathcal{X}} \left( \tilde{\tD} \mathring{I}(\mathbf{x}) \Delta \mathbf{x} \right)^2 dV_g,
		\end{align*}
		and, hence 
		\begin{align}
			\tD_1^2\bm{\mathcal{C}}(I_3, \mathring{I})[\Delta, \Delta]
			&= \int_{\mathcal{X}} \left( \tr( \tilde{\tD} \mathring{I}(\mathbf{x}) \Delta \mathbf{x}) \right)^2 dV_g, \notag \\
			&= \int_{\mathcal{X}} \left( \tr( \mathbf{x} \tilde{\tD} \mathring{I}(\mathbf{x}) \Delta ) \right)^2 dV_g, \notag \\
			&= \int_{\mathcal{X}} \left\langle \tilde{\nabla} \mathring{I}(\mathbf{x}) \mathbf{x}^\top, \Delta \right\rangle^2 dV_g.
			\label{eq:hessian_equation}
		\end{align}
		Observe that $\text{grad}\mathring{I}(\mathbf{x})\mathbf{x}^\top \in \gothsl(3)$ as $\tr(\text{grad}\mathring{I}(\mathbf{x})\mathbf{x}^\top) = \tr(\tilde{\tD} \mathring{I}(\mathbf{x})\mathbf{x}) = 0$.
		Now suppose the Hessian is degenerate; i.e. there exists a non-zero $\Delta \in \gothsl(3)$ such that $\tD_1^2\bm{\mathcal{C}}(I_3, \mathring{I})[\Delta, \Delta] = 0$.
		Then, since the integrand in \eqref{eq:hessian_equation} is non-negative for every $\mathbf{x}$ and $\tilde{\nabla} \mathring{I}$ exists almost everywhere on $\calX$, it must be that 
		\begin{align} \label{eq:image_obs_condition}
			\left\langle \tilde{\nabla} \mathring{I}(\mathbf{x}) \mathbf{x}^\top, \Delta \right\rangle = 0,
		\end{align}
		for almost every $\mathbf{x} \in \mathcal{X}$.
    	This implies that for any subset $\calU \subseteq \mathcal{X}$ of positive measure,
		\begin{align*}
			\int_{\mathcal{U}} \left\langle \tilde{\nabla} \mathring{I}(\mathbf{x}) \mathbf{x}^\top, \Delta \right\rangle^2 dV_g = 0.
		\end{align*}
		That is, $\langle \tilde{\nabla} \mathring{I}(\mathbf{x}) \mathbf{x}^\top, \Delta \rangle$ vanishes almost everywhere on $\calU$.
		This contradicts the assumption \eqref{eq:span_of_gradients}, and therefore the Hessian \eqref{eq:hessian_equation} must be non-degenerate, and the cost has an isolated global minimum at $E=I_3$.
		
	\end{proof}

	\begin{lemma}
    The non-degeneracy condition \eqref{eq:span_of_gradients} holds if and only if the stabilizer
    \begin{align} \label{eq:stab_condition}
        \mathfrak{stab}_{\bm{\mu}}(\mathring{I}) := \{ E \in \SL(3) \mid \mathring{I} \circ \bm{\rho}_E \simeq \mathring{I} \},
    \end{align} 
    is discrete, where, given $I_1,I_2 \in W^{2,2}(\calX)$, we define $I_1 \simeq I_2$ to mean that $I_1(\mathbf{x}) = I_2(\mathbf{x})$ for almost all $\mathbf{x} \in \calX$.
	Equivalently, the kernel of the differential
	\begin{align*}
		\ker \tD\bm{\mu}_{\mathring{I}}(I_3) := \{ \Delta \in \gothsl(3) \; \mid \; \tD\bm{\mu}_{\mathring{I}}(I_3)[\Delta] \simeq 0 \} = \gothsl(3),
	\end{align*} 
	where we write $\tD\bm{\mu}_{\mathring{I}}(I_3)[\Delta] \simeq 0$ to mean that $\tD\bm{\mu}_{\mathring{I}}(I_3)(\mathbf{x}) = 0$ for almost all $\mathbf{x} \in \calX$.
    \end{lemma}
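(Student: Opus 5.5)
The plan is to reduce both characterisations to one linear object: the infinitesimal generator of the action $\bm{\mu}$ at $\mathring{I}$. I read the statement's kernel condition as $\ker \tD\bm{\mu}_{\mathring{I}}(I_3)=\{0\}$; the displayed "$=\gothsl(3)$" appears to be a typo, since otherwise the two characterisations would contradict each other. I also read \eqref{eq:span_of_gradients} as holding for every nonzero $\Delta\in\gothsl(3)$, which is how it is used in Lemma~\ref{thm:lemma_non_degeneracy}.

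\textbf{Step 1: the generator is the Hessian integrand.} Using \eqref{eq:d_rho} and the chain rule for weak derivatives under the smooth diffeomorphisms $\bm{\rho}_H$ (as in the proof of Lemma~\ref{lemma1}), I compute, for almost every $\mathbf{x}\in\calX$,
\[
\tD\bm{\mu}_{\mathring{I}}(I_3)[\Delta](\mathbf{x}) = \tfrac{d}{dt}\big|_{t=0}\,\mathring{I}\big(\bm{\rho}(\exp(-t\Delta),\mathbf{x})\big) = \tilde{\tD}\mathring{I}(\mathbf{x})\,\pi_{\mathbf{x}}\Delta\mathbf{x} = \big\langle \tilde{\nabla}\mathring{I}(\mathbf{x})\mathbf{x}^\top,\Delta\big\rangle .
\]
Hence $\Delta\in\ker \tD\bm{\mu}_{\mathring{I}}(I_3)$ if and only if the integrand of \eqref{eq:hessian_equation} vanishes almost everywhere. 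By the argument already given in the proof of Lemma~\ref{thm:lemma_non_degeneracy}, this holds if and only if the integral in \eqref{eq:span_of_gradients} is zero for every positive-measure $\calU\subset\calX$. So \eqref{eq:span_of_gradients} for all $\Delta\neq 0$ is equivalent to a trivial kernel.

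\textbf{Step 2: the stabilizer is a closed subgroup.} It is a subgroup by the action property of Lemma~\ref{lemma1}. For closedness, I would show that $E\mapsto \mathring{I}\circ\bm{\rho}_E$ is continuous into $\L^2(\calX)$. The route is to approximate $\mathring{I}$ in $\L^2$ by continuous functions, which works because the Jacobian bound $M_0(E)$ from Lemma~\ref{lemma1} is locally uniform in $E$. Then $\mathfrak{stab}_{\bm{\mu}}(\mathring{I})$ is the preimage of a point and hence closed. By Cartan's closed subgroup theorem it is an embedded Lie subgroup with Lie algebra $\mathfrak{h}=\{\Delta : \exp(t\Delta)\in\mathfrak{stab}_{\bm{\mu}}(\mathring{I})\ \forall t\}$, and it is discrete if and only if $\mathfrak{h}=\{0\}$.

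\textbf{Step 3: $\mathfrak{h}=\ker \tD\bm{\mu}_{\mathring{I}}(I_3)$.} The inclusion $\mathfrak{h}\subseteq\ker$ is immediate: if $\mathring{I}\circ\bm{\rho}_{\exp(t\Delta)}\simeq\mathring{I}$ for all $t$, then differentiating at $t=0$, weakly by pairing against test functions, gives zero almost everywhere.

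For the converse, set $f_t:=\mathring{I}\circ\bm{\rho}_{\exp(t\Delta)}$. The group property gives
\[
\tfrac{d}{ds}\big|_{s=t} f_s = \big(\tD\bm{\mu}_{\mathring{I}}(I_3)[\pm\Delta]\big)\circ\bm{\rho}_{\exp(t\Delta)} .
\]
This is the composition of an almost-everywhere-zero function with a diffeomorphism, and it vanishes almost everywhere because diffeomorphisms preserve null sets. Pairing with any test function $\varphi$ shows that $t\mapsto\langle f_t,\varphi\rangle$ is constant, so $f_t\simeq\mathring{I}$ and $\exp(t\Delta)$ lies in the stabilizer. Combining Steps 1–3 gives the equivalence.

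\textbf{Main obstacle.} I expect the hard part to be the rigorous justification of differentiating $t\mapsto f_t$ in the weak/$\L^2$ sense for Sobolev data. This is complicated by the restricted domain $\calX$: $\bm{\rho}_{\exp(t\Delta)}$ moves points across $\partial\calX$, and with the zero-extension convention a boundary term could appear. My first attempt would be to work locally on $\calU$ compactly inside $\calX$ and small $t$, and then extend by the group property to all $t$ and exhaust $\calX$. If boundary effects persist, I would state the equivalence for $\calX=\S^2$ or with stabilizer elements required to preserve $\calX$.
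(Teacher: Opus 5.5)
Your argument follows the paper's route. Step~1 is the paper's rewriting of \eqref{eq:image_obs_condition} as the kernel condition. The converse in your Step~3 is the paper's flow argument, with pairing against test functions in place of differentiating $\tfrac12\|\mathring{I}\circ\bm{\rho}_{\exp(t\Delta)}-\mathring{I}\|^2$. Your Step~2 (closedness plus Cartan) and the inclusion $\mathfrak{h}\subseteq\ker$ supply what the paper only asserts in its final sentence, and your reading of the typo ($\ker=\{0\}$) is right. For $\calX=\S^2$ this is a complete proof, and more careful than the paper's.

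Your boundary worry, however, is not just technical. For a proper subdomain with the zero-extension convention, the implication ``nontrivial kernel $\Rightarrow$ non-discrete stabilizer'' is false, so it cannot be patched; the paper's flow argument has the same blind spot.

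\emph{Counterexample.} Take $\calX$ a spherical cap centred at $q$, a unit vector $p\neq\pm q$, and $\mathring{I}(\mathbf{x})=g(p^\top\mathbf{x})$ with $g$ positive and strictly monotone.
\begin{itemize}
\item $\Delta=p^\times$ lies in the kernel, since $\tilde{\nabla}\mathring{I}(\mathbf{x})=g'(p^\top\mathbf{x})\,\pi_{\mathbf{x}}p$ is orthogonal to $\Delta\mathbf{x}=p\times\mathbf{x}$.
\item Any $E$ in the stabilizer must satisfy $\bm{\rho}_E(\calX)\subseteq\calX$, because $\mathring{I}>0$ on $\calX$ and the extension is zero. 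It must also preserve $p^\top\mathbf{x}$ on $\calX$.
\item A polynomial-identity argument then gives $E^{-1}\in\SO(3)$ with $E^{-1}p=p$.
\item A rotation about $p$ maps the cap into itself only if it fixes $q$, so $E=I_3$ and the stabilizer is $\{I_3\}$.
\end{itemize}

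Hence, of your two fallbacks, only $\calX=\S^2$ works. Requiring stabilizer elements to preserve $\calX$ merely shrinks the stabilizer; it does nothing about kernel flows that carry points out of $\calX$. The subgroup claim in your Step~2 also needs $\calX=\S^2$.
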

    
    \begin{proof}
    The relation \eqref{eq:image_obs_condition} can be equivalently expressed as $\langle \tilde{\nabla} \mathring{I}(\mathbf{x}), \pi_{\mathbf{x}}\Delta \mathbf{x} \rangle = 0$, and hence
    \begin{align}\label{eq:kernel_condition}
		\left\langle \tD \bm{\rho}_{\mathbf{x}}(I_3)[\Delta], \tilde{\nabla} \mathring{I}(\mathbf{x}) \right\rangle = 0,
	\end{align}
    for almost every $\mathbf{x} \in \calX$. 
	If this holds for all $\Delta \in \sl(3)$, then it is exactly the kernel condition in the lemma statement.
	
	To see that this implies the stabilizer condition \eqref{eq:stab_condition}, suppose there exists a non-zero $\Delta \in \gothsl(3)$ satisfying \eqref{eq:kernel_condition}. 
	Then $\mathring{I}$ remains invariant almost everywhere along the flow lines generated by $t \mapsto \bm{\rho}(\exp(t\Delta), \mathbf{x})$ for all $t \in \R$.
	Specifically, one has
	\begin{align*}
		& \ddt \frac{1}{2} \int_{\calX} \vert \mathring{I} \circ \bm{\rho}_{\exp(t\Delta)}(\mathbf{x}) - \mathring{I}(\mathbf{x}) \vert^2 dV_g  \\
		&= \int_{\calX} \left\langle \mathring{I} \circ \bm{\rho}_{\exp(t\Delta)}(\mathbf{x}) - \mathring{I}(\mathbf{x}) 
			\; ,
			\tilde{\tD} \mathring{I} (\bm{\rho}_{\exp(t\Delta)}(\mathbf{x})) \tD \bm{\rho}_{\bm{\rho}_{\exp(t\Delta)}(\mathbf{x})}(I_3)[\Delta] \right\rangle d V_g \\
		&= \int_{\calX} \langle \mathring{I} \circ \bm{\rho}_{\exp(t\Delta)}(\mathbf{x}) - \mathring{I}(\mathbf{x}) 
			\; , \;
			0 \rangle d V_g = 0.
	\end{align*}
    It follows that $ \mathring{I} \circ \bm{\rho}_{\exp(t\Delta)}(\mathbf{x}) = \mathring{I}(\mathbf{x})$ for almost all $\mathbf{x}$ for all $t \in \R$.
    Thus, $\exp(t\Delta)$ lies in the stabilizer subgroup $\mathfrak{stab}_{\bm{\mu}}(\mathring{I})$, which is therefore continuous.
	Conversely, this also shows that if $\mathfrak{stab}_{\bm{\mu}}(\mathring{I})$ is continuous, then there exists a nonzero $\Delta \in \mathfrak{stab}_{\bm{\mu}}(\mathring{I})$ in its Lie algebra that satisfies \eqref{eq:image_obs_condition}.
    \end{proof}
    \begin{remark}
	The standard non-degeneracy condition in observer design based on feature measurements (see \cite{mahony2013observers}) requires a finite set of measured features $(\mathring{y}_1, \dots, \mathring{y}_n)$ to satisfy
	$ \bigcap^n_{i=1} \mathfrak{stab}_{\bm{\rho}}(\mathring{y}_i) = \{I_3\}$. 
	Condition \eqref{eq:stab_condition} can be viewed as a continuous counterpart that extends this requirement to intensity-based measurements modelled as maps in the Sobolev space $W^{2,2}(\calX)$.
\end{remark}

\subsection{Generation of degenerate images}
Degeneracies in the cost arise when the reference image admits continuous symmetries captured by its stabilizer subgroup. 
To illustrate situations in which condition \eqref{eq:stab_condition} is violated, we construct reference images with intrinsic continuous symmetries. That is, each reference image $\mathring{I}$ is constructed such that 
$\mathring{I}(\bm{\rho}(\exp(t \Delta), \mathbf{x}))=\mathring{I}(\mathbf{x})$
for all $t \in \R$ and $\mathbf{x} \in \calX$, for some $\Delta \in \gothsl(3)$ generating a one-parameter subgroup $\{\exp(t \Delta)\} \subset \mathfrak{stab}_{\bm{\mu}}(\mathring{I})$.

We numerically synthesise such images by sampling a set of points in $\calX$ and propagating them along their orbits induced by the corresponding group action for multiple values of~$t$.
The procedure is implemented in Python, and we generate six reference images, shown in Fig.~\ref{fig:generated_images}. Each image is invariant under the flow of a specific one-parameter subgroup of $\SL(3)$, generated by an element $\Delta \in \gothsl(3)$ inducing a continuous projective transformation of the image.

\begin{figure*}
\begin{center}
    \includegraphics[width=\textwidth]{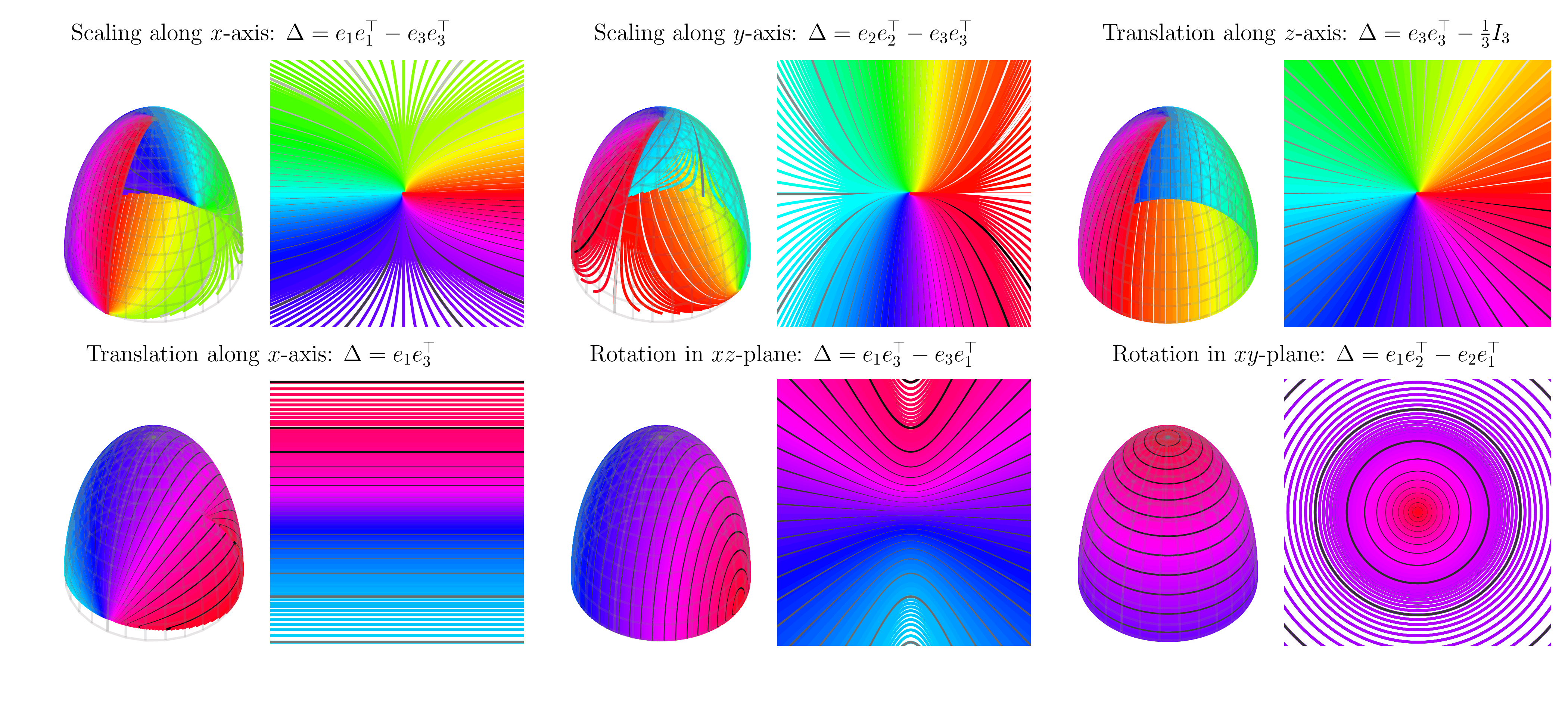}
		\caption{Numerically constructed degenerate reference images exhibiting continuous symmetries invariant under one-parameter subgroups of $\SL(3)$. For each case, a hemisphere representation (left) and the corresponding planar image projection (right) are shown. Image points are grouped into orbits induced by the associated group action; all points in the same orbit share the same color (intensity). Selected orbits are highlighted in grayscale to emphasise the underlying symmetry structure.}
        \label{fig:generated_images}
    \end{center}
\end{figure*}

\section{Gradient-based observer derivation} \label{sec:observer_derivation}

    The cost function defined in \eqref{eq:direct_cost} can be viewed as a candidate Lyapunov function for the observer design, where the correction term can be obtained directly from its gradient flow, as outlined in the following theorem.
	\begin{theorem} \label{thm1}
		Consider the kinematics \eqref{eq:sys} and assume that the group velocity $U \in \gothsl(3)$ is known.
        Consider the observer \eqref{eq:observer} and the cost function $\bm{\calC}$ defined by \eqref{eq:direct_cost} and suppose that the cost is non-degenerate in the sense of Lemma \ref{thm:lemma_non_degeneracy}. Define the correction term 
		\begin{equation} \label{eq:deltaconst}
			\Delta = k_{\Delta}  \int_{\mathcal{X}} r(\mathbf{x})  \tilde{\nabla} I^e(\mathbf{x}) \mathbf{x}^\top dV_g, 
		\end{equation}
		where $r(\mathbf{x}) = I^e(\mathbf{x}) - \mathring{I}(\mathbf{x})$ denotes the image intensity residual at $\mathbf{x}$ and $k_{\Delta} > 0$ a chosen gain.
	Then, its time derivative satisfies $\dot{\bm{\mathcal{C}}}(E) = - |\Delta|^2/k_{\Delta}$, and the equilibrium $E = I_3$ of the autonomous system \eqref{eq:error} is locally exponentially stable. 
	\end{theorem}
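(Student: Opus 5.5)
Since the cost \eqref{eq:direct_cost} depends on time only through the error $E$, I would treat $\bm{\mathcal{C}}(E,\mathring{I})$ as a Lyapunov candidate for the autonomous error system \eqref{eq:error}, with $\dot{E} = \Delta E$. The derivative is then given by the first-order expression computed in the proof of Lemma \ref{thm:lemma_non_degeneracy}:
\begin{align*}
\dot{\bm{\mathcal{C}}} = \tD_1\bm{\mathcal{C}}(E,\mathring{I})[\Delta E] = -\int_{\mathcal{X}} r(\mathbf{x})\, \tilde{\tD} I^e(\mathbf{x})\, \Delta \mathbf{x}\, dV_g = -\Big\langle \int_{\mathcal{X}} r(\mathbf{x}) \tilde{\nabla} I^e(\mathbf{x}) \mathbf{x}^\top dV_g ,\, \Delta \Big\rangle,
\end{align*}
where the last step uses $\tilde{\tD} I^e \Delta \mathbf{x} = \tr(\mathbf{x}\tilde{\tD} I^e \Delta) = \langle \tilde{\nabla} I^e \mathbf{x}^\top, \Delta\rangle$. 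Moving the integral inside the inner product is justified because the integrand is integrable: $r \in \L^2$, $\tilde{\nabla} I^e \in \L^2$, and $|\mathbf{x}|=1$. The matrix $\int r\tilde{\nabla} I^e \mathbf{x}^\top$ is traceless because $\langle \tilde{\nabla} I^e(\mathbf{x}), \mathbf{x}\rangle = 0$, so $\Delta \in \gothsl(3)$ as required. Substituting \eqref{eq:deltaconst} then gives $\dot{\bm{\mathcal{C}}} = -|\Delta|^2/k_\Delta$ immediately. In other words, $\Delta$ is $-k_\Delta$ times the right-invariant gradient of $\bm{\mathcal{C}}$ at $E$, so the error dynamics are $\dot E = -k_\Delta\, \mathrm{grad}\,\bm{\mathcal{C}}(E)$.

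For local exponential stability I would linearise at $E=I_3$. Writing $E = \exp(\varepsilon)$ with $\varepsilon \in \gothsl(3)$ small, the correction is $\Delta(E) = -k_\Delta \mathrm{grad}\,\bm{\mathcal{C}}(E)$, which vanishes at $I_3$ because $r \equiv 0$ there. Its differential at $I_3$ is $-k_\Delta \mathcal{H}$, where $\mathcal{H}$ is the Hessian operator on $\gothsl(3)$ defined by the quadratic form \eqref{eq:hessian_equation}:
\begin{align*}
\langle \mathcal{H}\varepsilon, \varepsilon\rangle = \int_{\mathcal{X}} \big\langle \tilde{\nabla}\mathring{I}(\mathbf{x})\mathbf{x}^\top, \varepsilon\big\rangle^2 dV_g .
\end{align*}
Differentiating $\Delta$ produces two terms. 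The one proportional to $r$ vanishes at $E=I_3$. The other term is $-k_\Delta \int \tilde{\nabla}\mathring{I}\mathbf{x}^\top \langle \tilde{\nabla}\mathring{I}\mathbf{x}^\top, \varepsilon\rangle dV_g$, so the linearisation is $\dot\varepsilon = -k_\Delta \mathcal{H}\varepsilon + O(|\varepsilon|^2)$. By Lemma \ref{thm:lemma_non_degeneracy}, $\mathcal{H}$ is symmetric positive definite, so its smallest eigenvalue $\lambda_{\min} > 0$. Lyapunov's indirect method then gives local exponential stability with rate approaching $k_\Delta\lambda_{\min}$.

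A direct Lyapunov argument gives the same conclusion. Near $I_3$ a Taylor expansion yields $\tfrac{\lambda_{\min}}{4}|\varepsilon|^2 \le \bm{\mathcal{C}} \le \lambda_{\max}|\varepsilon|^2$ and $|\Delta|^2 \ge c\,\bm{\mathcal{C}}$ for some $c>0$. Hence $\dot{\bm{\mathcal{C}}} \le -(c/k_\Delta)\bm{\mathcal{C}}$, which gives exponential decay of $\bm{\mathcal{C}}$ and therefore of $|\varepsilon|$, provided the trajectory stays in the neighbourhood. It does, because $\bm{\mathcal{C}}$ is non-increasing and its sublevel sets near $I_3$ are compact neighbourhoods, the minimum being isolated.

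The main obstacle is regularity. The Taylor expansion, and differentiating $\Delta(E)$ at $I_3$, require $E \mapsto \bm{\mathcal{C}}(E)$ to be $C^2$ near $I_3$ with the Hessian given by \eqref{eq:hessian_equation}, but $\mathring{I}$ is only in $W^{2,2}$. I would handle this in two steps. First, use the change of variables and the bounds $M_0, M_1, M_2$ from Lemma \ref{lemma1}, which are locally uniform in $H$, to show that $E \mapsto \mathring{I}\circ\bm{\rho}_E$ is $C^1$ into $\L^2$ (strong continuity of the representation) with derivative $-\langle\tilde{\nabla} I^e, \pi_{\mathbf{x}}\Delta\mathbf{x}\rangle$. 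Then apply the same argument to $\tilde{\nabla}\mathring{I} \in W^{1,2}$ to obtain the second derivative. The boundary and domain effects on $\mathcal{X}$ noted in the earlier remark would be assumed negligible near $I_3$.
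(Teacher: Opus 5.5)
Your proposal is correct and follows the paper's proof. Both use the Lyapunov candidate $\bm{\mathcal{C}}(E,\mathring{I})$, compute its derivative via $\tD\bm{\rho}_{\mathbf{x}}(I_3)[\Delta] = -\pi_{\mathbf{x}}\Delta\mathbf{x}$ to obtain $-|\Delta|^2/k_\Delta$, and then linearise at $I_3$, where the correction has Jacobian $-k_\Delta$ times the positive-definite Hessian ($C^\star C$ in the paper's notation). The paper additionally runs a LaSalle argument on a compact sublevel set to describe the basin of attraction, which the stated result does not need, while your direct Lyapunov bound on $\bm{\mathcal{C}}$ and your $W^{2,2}$ regularity discussion add rigour the paper omits.
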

	\begin{proof}
		This theorem can be proved using classical Lyapunov theory by choosing the candidate Lyapunov function $ \bm{\mathcal{L}}(E) := \bm{\mathcal{C}}(E, \mathring{I})$.
		Its time derivative is 
		\begin{align*}
			\dot{ \bm{\mathcal{L}}}(E) 
			&=\int_{\mathcal{X}} \left(I^e(\mathbf{x}) - \mathring{I}(\mathbf{x})\right) \frac{d}{dt} \bm{\mu}(E^{-1}, \mathring{I})(\mathbf{x}) dV_g \\
			&= \int_{\mathcal{X}} r(\mathbf{x}) \frac{d}{dt} \mathring{I} \circ \bm{\rho}_{E}(\mathbf{x})dV_g \\
			&= \int_{\mathcal{X}} r(\mathbf{x})  \frac{d}{dt} \mathring{I} \circ \bm{\rho}_{\mathbf{x}}(E)dV_g \\
			&= \int_{\mathcal{X}} r(\mathbf{x}) \tD(\mathring{I} \circ \bm{\rho}_{\mathbf{x}})(E)[\Delta E]dV_g \\
			&= \int_{\mathcal{X}} r(\mathbf{x}) \tilde{\tD} I^e (\mathbf{x}) \tD \bm{\rho}_\mathbf{x}(I_3)[\Delta] dV_g.
		\end{align*}
		Then, using \eqref{eq:d_rho}, we can write
		\begin{align}
            \dot{ \bm{\mathcal{L}}}(E)
			&= - \int_{\mathcal{X}} r(\mathbf{x}) \tilde{\nabla} I^e(\mathbf{x})^\top \pi_{\mathbf{x}} \Delta \mathbf{x} dV_g \notag \\
			&= - \int_{\mathcal{X}} \left\langle r(\mathbf{x}) \tilde{\nabla}I^e(\mathbf{x}),  \Delta \mathbf{x} \right\rangle dV_g \notag \\
			&= - \left\langle  \int_{\mathcal{X}}r(\mathbf{x}) \tilde{\nabla}I^e(\mathbf{x}) \mathbf{x}^\top dV_g , \Delta \right\rangle. \label{eq:costgrad}
		\end{align}
Finally, choosing the correction $\Delta$ as in \eqref{eq:deltaconst} yields
\begin{align} \label{eq:lyapunov_derivative}
    \dot{\bm{\mathcal{L}}}(E) = - \frac{1}{k_{\Delta}} |\Delta|^2 \leq 0. 
\end{align}
Thus, the time derivative $\dot{\bm{\mathcal{L}}}$ is negative semi-definite, and equal to zero when $\Delta = 0$.
By the non-degeneracy assumption of Lemma \ref{thm:lemma_non_degeneracy}, there exists an open neighbourhood of the global minimum $I_3$ in $\SL(3)$ where the cost has compact connected sub-level sets all containing $I_3$ and no other critical point of the cost since its Hessian at this point is positive definite. Let $\mathfrak{B}^E \subseteq \SL(3)$ be the largest such sub-level set of the cost function on which $\bm{\calC}(E, \mathring{I})$ is proper with respect to $E$.
From this, it follows that $E$ is locally bounded.
Since $\bm{\calL}$ is non-increasing as a result of \eqref{eq:lyapunov_derivative}, one ensures that all solutions with initial condition $E(0) \in \mathfrak{B}^E$ remains in $\mathfrak{B}^E$ for all time.
The error dynamics \eqref{eq:error} are autonomous, then by application of LaSalle's invariance principle \citep{khalil2002nonlinear}, we deduce that all solutions of this system with $E(0) \in \mathfrak{B}^E$ converge to the largest invariant set contained in $\mathfrak{I} = \{E \in \mathfrak{B}^E  \,:\, \Delta(E) = 0 \}$. 
And since $\Delta(E) = 0$ (or equivalently $\nabla \bm{\calC}(E, \mathring{I}) = 0$) on $\mathfrak{B}^E$ implies that $E = I_3$, it follows that the largest invariant subset of $\mathfrak{I}$ is $\{I_3\}$ and $E \rightarrow I_3$ for all $E(0) \in \mathfrak{B}^E$.

To conclude local exponential stability of the equilibrium $E = I_3$, we linearise the error dynamics about $I_3$ by writing $E \approx I_3 + \bm{\varepsilon}^\wedge$, with normal coordinates $\bm{\varepsilon} \in \R^8$.
From \eqref{eq:error}, the first-order approximation of the error dynamics is $ \dot{\bm{\varepsilon}} = \Delta^\vee + \calO(|\bm{\varepsilon}|^2)$.
The measurement residual is $\tilde{\bm{y}}(\bm{\varepsilon}) := I^e - \mathring{I} \in \L^2(\calX)$.
A first-order expansion about $\bm{\varepsilon} = 0$ yields
$ \tilde{\bm{y}}(\bm{\varepsilon}) = C \bm{\varepsilon} + \calO(|\bm{\varepsilon}|^2)$,
where $C:\R^8 \rightarrow \L^2(\calX)$ is the Jacobian operator of the warped image map evaluated at $E = I_3$,
with conjugate $C^\star:\L^2(\calX) \rightarrow \R^8$.
Since the cost function satisfies $\bm{\calC}(E,\mathring{I}) = \frac{1}{2}\|\tilde{\bm{y}}(E)\|^2$,
its second-order expansion in normal coordinates is $\bm{\calC}(\bm{\varepsilon}) = \frac{1}{2}\bm{\varepsilon}^\top C^\star C \bm{\varepsilon} + \calO(|\bm{\varepsilon}|^3)$, which shows that $\Hess \bm{\calC}(I_3, \mathring{I}) = C^\star C$.
By the non-degeneracy assumption of Lemma~\ref{thm:lemma_non_degeneracy},
$C^\star C$ is positive definite.
Since the correction $\Delta$ is chosen from the gradient of the cost, it satisfies in local coordinates $\Delta^\vee = -k_\Delta C^\star C \bm{\varepsilon} + \calO(|\bm{\varepsilon}|^2)$.
The linearised error dynamics therefore reduce to
$\dot{\bm{\varepsilon}} = -k_\Delta C^\star C \bm{\varepsilon}$.
Define the Lyapunov function $\bm{\mathcal{L}}_{\mathrm{lin}}(\bm{\varepsilon}) := \frac{1}{2}|\bm{\varepsilon}|^2$. Its time derivative along the linearised dynamics satisfies
$ \dot{\bm{\mathcal{L}}}_{\mathrm{lin}}(\bm{\varepsilon})
= -k_\Delta \bm{\varepsilon}^\top C^\star C \bm{\varepsilon}
\leq -2k_\Delta \lambda_{\min} \bm{\mathcal{L}}_{\mathrm{lin}}(\bm{\varepsilon})$,
where $\lambda_{\min} > 0$ denotes the smallest eigenvalue of $C^\star C$.
It follows that $\bm{\varepsilon} \to 0$ exponentially, and therefore
the equilibrium $E = I_3$ is locally exponentially stable.

    \end{proof}

\subsection{Matrix-gain extension of the gradient observer}
Using a scalar gain in the correction \eqref{eq:deltaconst} ignores the underlying $8$-dimensional geometry of $\gothsl(3)$, and may limit convergence.
In general, the scalar gain can be replaced by any $8 \times 8$ positive-definite matrix, which allows the gradient components to be scaled differently in the canonical coordinates of $\gothsl(3)$. In particular, the gain can be chosen as the inverse of the Hessian of the cost function, which compensates for variations in the sensitivity of the cost along different directions.
This extension is formalised in the following proposition. 
\begin{proposition}
Consider the homography kinematics \eqref{eq:sys} and observer \eqref{eq:observer}.
	Consider the cost function defined by \eqref{eq:direct_cost} and suppose that the cost is non-degenerate in the sense of Lemma \ref{thm:lemma_non_degeneracy}. 
	Define the correction by
\begin{equation} \label{eq:delta_matrix_gain}
    \Delta = \left( \bm{K}  \left(\int_{\mathcal{X}} r(\mathbf{x}) \tilde{\nabla} I^e(\mathbf{x}) \mathbf{x}^\top dV_g\right)^\vee \right)^\wedge,
\end{equation}
where $\bm{K}= k_\Delta (\Hess\bm{\calC}(I_3, \mathring{I}))^{-1}$, where $\Hess \bm{\calC}(I_3, \mathring{I})$ denotes the Hessian matrix of $\bm{\calC}$ evaluated at $I_3$ given by
\begin{equation} \label{eq:hessian_expression}
    \Hess \bm{\calC}(I_3, \mathring{I}) =
\int_{\mathcal{X}}  (\tilde{\nabla}\mathring{I}(\mathbf{x}) \mathbf{x}^\top)^\vee (\tilde{\nabla}\mathring{I}(\mathbf{x}) \mathbf{x}^\top)^{\vee \top} dV_g.
\end{equation}   
Then, the observer dynamics \eqref{eq:observer} achieve local exponential convergence around the equilibrium $E = I_3$ with convergence rate $k_\Delta$.
\end{proposition}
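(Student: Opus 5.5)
I will mirror the linearisation argument used in the second half of the proof of Theorem~\ref{thm1}, replacing the scalar gain by the matrix gain $\bm{K}$. The first step is to confirm that \eqref{eq:hessian_expression} is exactly the Hessian $C^\star C$ in normal coordinates. By \eqref{eq:hessian_equation}, $\tD_1^2\bm{\calC}(I_3,\mathring{I})[\Delta,\Delta] = \int_{\calX}\langle \tilde{\nabla}\mathring{I}(\mathbf{x})\mathbf{x}^\top,\Delta\rangle^2 dV_g$. Since $\tilde{\nabla}\mathring{I}(\mathbf{x})\mathbf{x}^\top\in\gothsl(3)$ and $\{B_j\}$ is Frobenius-orthonormal, we have $\langle A,\Delta\rangle = A^{\vee\top}\Delta^\vee$ for $A,\Delta \in \gothsl(3)$. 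Writing $\Delta=\bm{\varepsilon}^\wedge$ then gives the quadratic form $\bm{\varepsilon}^\top \Hess\bm{\calC}(I_3,\mathring{I})\,\bm{\varepsilon}$ with the matrix \eqref{eq:hessian_expression}. By Lemma~\ref{thm:lemma_non_degeneracy} this matrix is positive definite, so $\bm{K}$ is well defined, symmetric and positive definite.

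Next I establish the Lyapunov and descent property, to secure convergence from a neighbourhood as in Theorem~\ref{thm1}. Set $G:=\int_{\calX} r\,\tilde{\nabla}I^e\,\mathbf{x}^\top dV_g$. From \eqref{eq:costgrad}, $\dot{\bm{\calL}}=-\langle G,\Delta\rangle = -G^{\vee\top}\bm{K}G^\vee \le -\lambda_{\min}(\bm{K})|G^\vee|^2$, which is strictly negative unless $G=0$. The LaSalle argument of Theorem~\ref{thm1} on the sublevel set $\mathfrak{B}^E$ then carries over verbatim, because $\bm{K}$ is constant and positive definite.

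For the rate I linearise. With $E\approx I_3+\bm{\varepsilon}^\wedge$, the expansion in the proof of Theorem~\ref{thm1} gives $G^\vee = -\Hess\bm{\calC}(I_3,\mathring{I})\,\bm{\varepsilon}+\calO(|\bm{\varepsilon}|^2)$; this is the same relation as $\Delta^\vee=-k_\Delta C^\star C\bm{\varepsilon}$ there, with $k_\Delta=1$. Hence $\Delta^\vee=\bm{K}G^\vee = -k_\Delta\bm{\varepsilon}+\calO(|\bm{\varepsilon}|^2)$, and combined with $\dot{\bm{\varepsilon}}=\Delta^\vee+\calO(|\bm{\varepsilon}|^2)$ we obtain $\dot{\bm{\varepsilon}}=-k_\Delta\bm{\varepsilon}$ at first order. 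Taking $\frac12|\bm{\varepsilon}|^2$ gives derivative $-k_\Delta|\bm{\varepsilon}|^2$, so the linearised system is exponentially stable with rate exactly $k_\Delta$ in every direction. By the standard Lyapunov linearisation theorem, the nonlinear error system \eqref{eq:error} is then locally exponentially stable, with rate arbitrarily close to $k_\Delta$ on small neighbourhoods.

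The main obstacle is making the first-order relation $G^\vee=-\Hess\bm{\calC}\,\bm{\varepsilon}+\calO(|\bm{\varepsilon}|^2)$ rigorous, since $G$ involves $\tilde{\nabla}I^e$ and $r$, which are only weakly differentiable. The expansion has two parts: $r=C\bm{\varepsilon}+\calO(|\bm{\varepsilon}|^2)$ in $\L^2$, and $\tilde{\nabla}I^e\to\tilde{\nabla}\mathring{I}$ in $\L^2$ as $E\to I_3$. I would obtain both from the $W^{2,2}$ bounds in Lemma~\ref{lemma1}, using continuity of translation under the smooth action. The product term $r\,(\tilde{\nabla}I^e-\tilde{\nabla}\mathring{I})$ is then $o(|\bm{\varepsilon}|)$ by Cauchy--Schwarz, which is enough for the linearisation argument.
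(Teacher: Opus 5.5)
Your proposal is correct and follows the paper's proof essentially step for step. You identify \eqref{eq:hessian_expression} through the vee/wedge isometry and reuse the Lyapunov/LaSalle argument with $\dot{\bm{\mathcal{L}}} = -G^{\vee\top}\bm{K}G^{\vee}$, which is the paper's $-|\Delta^\vee|^2_{\bm{K}^{-1}}$; you then linearise to $\dot{\bm{\varepsilon}} = -k_\Delta\bm{\varepsilon}$. Your extra paragraph justifying the first-order expansion of $G$ for weakly differentiable images, and your remark that the nonlinear rate is $k_\Delta$ only up to an arbitrarily small loss, go beyond what the paper spells out but do not change the argument.
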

\begin{proof} 
The Hessian of $\bm{\calC}$ at $I_3$ is the symmetric matrix satisfying 
$ \tD_1^2 \bm{\calC}(I_3, \mathring{I})[\Delta,\Delta] = \langle \Delta^\vee, \Hess  \bm{\calC} \, \Delta^\vee \rangle$.
Using \eqref{eq:hessian_equation}, the second-order derivative of $\bm{\calC}$ at $I_3$ writes 
\begin{align*}
    \tD_1^2\bm{\calC}(I_3, \mathring{I})[\Delta, \Delta] &= \int_{\mathcal{X}} \left\langle\tilde{\nabla}\mathring{I}(\mathbf{x}) \mathbf{x}^\top, \Delta \right\rangle^2 dV_g, \\
    &= \int_{\mathcal{X}} \left\langle (\tilde{\nabla}\mathring{I}(\mathbf{x}) \mathbf{x}^\top)^\vee , \Delta^\vee \right\rangle^2 dV_g, \\
    &= \left\langle \Delta^\vee, \int_{\mathcal{X}}   \left((\tilde{\nabla}\mathring{I}(\mathbf{x}) \mathbf{x}^\top)^\vee (\tilde{\nabla}\mathring{I}(\mathbf{x}) \mathbf{x}^\top)^{\vee \top} dV_g\right) \Delta^\vee \right\rangle. 
\end{align*}
Then, direct identification yields expression \eqref{eq:hessian_expression}.
Now, consider the Lyapunov function $ \bm{\mathcal{L}}(E) := \bm{\mathcal{C}}(E, \mathring{I})$.
	Using \eqref{eq:error}, recall that the time derivative of $\bm{\calL}$ is given by 
		\begin{align*}
			\dot{\bm{\mathcal{L}}}(E)
            &= - \left\langle  \int_{\mathcal{X}}r(\mathbf{x}) \tilde{\nabla}I^e(\mathbf{x}) \mathbf{x}^\top dV_g , \Delta \right\rangle, 
		\end{align*}
		From the definition \eqref{eq:delta_matrix_gain} of $\Delta$ and the properties of the vee-wedge operators, we have
		\begin{align*}  
            \dot{\bm{\mathcal{L}}}(E) 
			&= - \left\langle  \left( \bm{K}^{-1} \Delta^\vee \right)^\wedge , \Delta \right\rangle, \\ 
            &= - \left\langle \bm{K}^{-1} \Delta^\vee , \Delta^\vee \right\rangle = - |\Delta^\vee|^2_{\bm{K}^{-1}}.
		\end{align*}    
    The time derivative $\dot{\bm{\calL}}$ is negative semi-definite. 
	Following the same steps as in the proof of Theorem \ref{thm1} and using LaSalle's invariance principle, we show that $E$ converges to the identity for all solutions with initial conditions $H(0) \in \SL(3)$ and $\hat{H}(0) \in \SL(3)$ such that $E(0) \in \mathfrak{B}^E$. 

To analyse the local convergence properties, consider the first-order approximation $E \approx I_3 + \bm{\varepsilon}^\wedge$. 
Assuming that condition \eqref{eq:stab_condition} holds and substituting the gain $\bm{K} = k_\Delta(C^\star C)^{-1}$, the linearised error dynamics simplify to $\dot{\bm{\varepsilon}} = - k_\Delta (C^\star C)^{-1} C^\star C \bm{\varepsilon} = - k_\Delta \bm{\varepsilon} $.
Computing the time derivative of the Lyapunov function $\bm{\mathcal{L}}_{\mathrm{lin}}(\bm{\varepsilon}) := \frac{1}{2}|\bm{\varepsilon}|^2$ we get $ \dot{\bm{\mathcal{L}}}_{\mathrm{lin}}(\bm{\varepsilon}) = - k_\Delta|\bm{\varepsilon}|^2 = - 2 k_\Delta \bm{\mathcal{L}}_{\mathrm{lin}}(\bm{\varepsilon})$.
It follows that $\bm{\mathcal{L}}_{\mathrm{lin}}(\bm{\varepsilon})$ decreases exponentially and the error $\bm{\varepsilon}$ converges exponentially to zero at rate $k_\Delta$.
\end{proof}

\begin{remark} \label{remark:orthogonal_decomposition}
Another specialisation of the gain $\bm{K} \in \mathbb{S}_+(8)$ exploits the decomposition of the $\mathfrak{sl}(3)$ algebra into orthogonal symmetric and skew-symmetric subspaces. 
Let $\bm{M} := \int_{\mathcal{X}} r(\mathbf{x}) \tilde{\nabla} I^e(\mathbf{x}) \mathbf{x}^\top d\mathbf{x} \in \mathfrak{sl}(3)$ and define the correction 
\begin{equation} \label{eq:delta_orthogonal}
	\Delta = k_s \mathbb{P}_{\mathrm{s}}(\bm{M})+ k_a \mathbb{P}_{\mathrm{a}}(\bm{M}), \quad k_s,k_a > 0, \end{equation}
where the skew-symmetric subspace is the $\so(3)$ algebra, and its orthogonal complement, the symmetric subspace of $\gothsl(3)$, spans infinitesimal projective distortions (i.e., shearing and scaling). 
By tuning $k_s$ and $k_a$, the observer can scale corrections along these subspaces and
outperform \eqref{eq:deltaconst} when homography errors are predominantly due to either projective deformations or rotations. 
Therefore, correction \eqref{eq:delta_orthogonal} offers an effective trade-off between computational efficiency and convergence performance.  

\end{remark}

\section{Simulation results} \label{sec:simulation}

In this section, we present the simulation experiment conducted to verify the performance and convergence of the proposed observer \eqref{eq:observer}. 
From a reference image of a resolution of $256 \times 254$ pixels, the image at a given time was generated using $I = \bm{\mu}(H, \mathring{I})$.
The true homography dynamics are defined according to \eqref{eq:sys}, with initial condition and velocity
\begin{align*}
    H(0) = \begin{pmatrix} 1.031 & 0.051 &  0.087 \\
	-0.051 & 1.031 & -0.144 \\
	0 & 0 &  0.939 \end{pmatrix}, 
	U(t) = \begin{pmatrix} 0 & 0 &  -0.1 \\
	0 & 0 & 0.1 \\
	0 & 0 &  0 \end{pmatrix},
\end{align*}
corresponding to a constant translational motion parallel to the plane, with angular velocity $\Omega(t) = 0$ and $\ddt (\xi/d) = (-0.1, 0.1, 0)^\top$.

The observer dynamics are defined according to \eqref{eq:observer} with initial condition 
$\hat{H}(0) = I_3 $ 
and gain $k_\Delta = 0.01$. 
The true and estimated dynamics are integrated using Euler integration for 3~s with a time-step of 0.01~s using the matrix exponential to ensure that they remain in the Lie group $\SL(3)$ for all time.
Due to discretisation errors, the numerical estimate may drift from $\SL(3)$ and is therefore re-projected onto the group using \eqref{eq:SL3_proj}.

\subsection{Results and discussion}
The results are presented in Figures \ref{fig:sym_skew_errors} and \ref{fig:diff_image}. 
Figure \ref{fig:sym_skew_errors} reports the homography estimation error, $\epsilon_H = | I_3 - E |^2$,  and the normalized image intensity error between the reference and warped images $\epsilon_I = \tfrac{1}{N}\| I^e - \mathring{I} \|^2 = \frac{2}{N}\calC(E, \mathring{I})$, where $N$ is the number of pixels. 
The performance of the scalar-gain correction \eqref{eq:deltaconst} is compared with that obtained using the inverse Hessian–based correction term \eqref{eq:delta_matrix_gain} and with the correction derived from the orthogonal decomposition described in Remark~\ref{remark:orthogonal_decomposition} with $k_s = k_{\Delta}$ and $k_a = 2k_{\Delta}$.

The results indicate that all three observers achieve convergence of both \(\epsilon_H\) and \(\epsilon_I\) to zero.
The observer based on the orthogonal decomposition converges faster than the scalar-gain observer, while the inverse-Hessian-based observer achieves the fastest convergence, with the error decreasing exponentially.
Note that the errors do not converge exactly to zero due to pixel mismatch at the image borders introduced by the warping function, which prevents perfect alignment of all pixels of the warped and reference images even when the observer has converged.

Figure \ref{fig:diff_image} shows the evolution of the warped image and the corresponding difference image with respect to the reference at $t=0\ \text{s}$, $t=0.15\ \text{s}$, and $t=1\ \text{s}$. The results confirm that the warped image progressively converges to the reference image over time.
The intensity-based observer~\eqref{eq:observer} progressively aligns the warped image with the reference. The correction is driven by intensity differences within overlapping regions between $I^e$ and $\mathring{I}$. As the warped image converges toward the reference, the overlap error decreases, the magnitude of the correction diminishes, and the cost converges to zero.

\begin{figure}[ht]
\centering
\scriptsize{Homography estimation error }\\
\includegraphics[scale=.55]{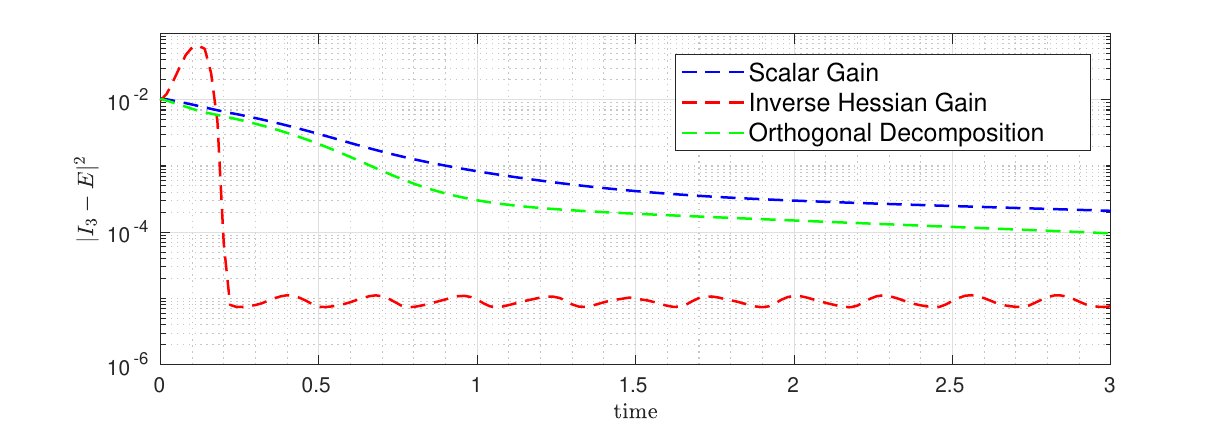}  \\
\scriptsize{Image intensity error }\\ 
\includegraphics[scale=.55]{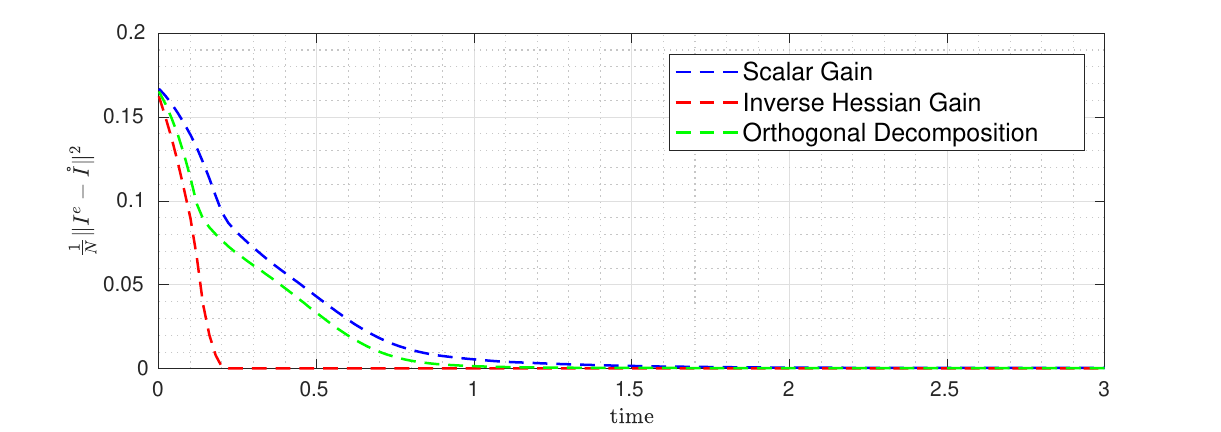}
\caption{Time evolution of the homography estimation error (in semilogarithmic scale) and normalised image intensity error for the observer \eqref{eq:observer} using the single scalar gain (blue), the inverse Hessian gain (red) and the orthogonal decomposition dual-gain (green).}
\label{fig:sym_skew_errors}
\end{figure}

\begin{figure}[ht]
\centering
\includegraphics[width=8.5cm,height=2.5cm]{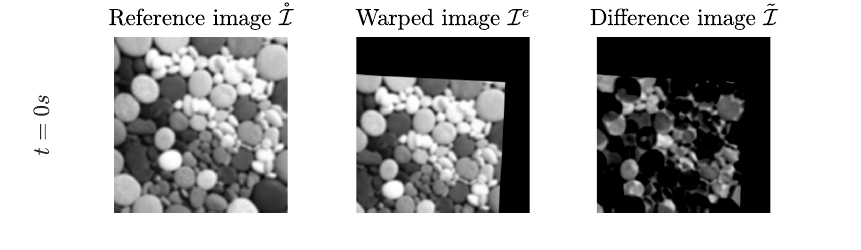} \\[-.1cm]
\includegraphics[width=8.5cm,height=2.5cm]{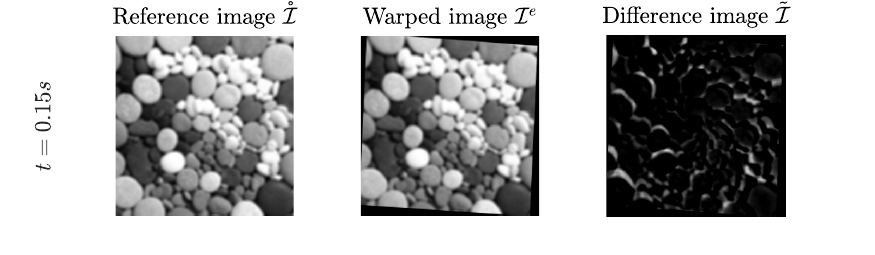} \\[-.1cm]
\includegraphics[width=8.5cm,height=2.5cm]{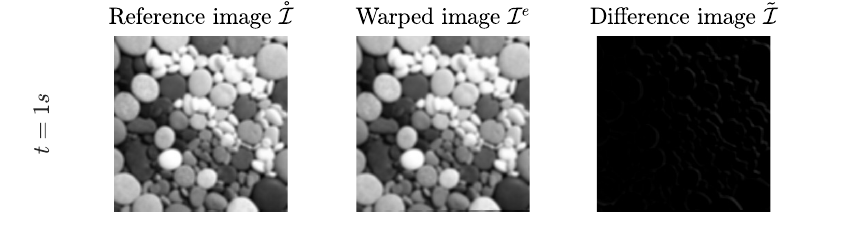} 
\caption{Reference image $\mathring{I}$, warped image $I^e$ and difference image $\tilde{I}$ at selected time intervals, using the observer~\eqref{eq:observer}.
As the observer converges, the warped image progressively aligns with the reference image, and correspondingly, the difference image approaches zero.
}
\label{fig:diff_image}
\end{figure}

\section{Conclusions} \label{sec:conclusion}
This work introduced a nonlinear observer framework for dynamic homography estimation from raw image intensities. By exploiting the $\SL(3)$ Lie group structure and its induced action on the Sobolev space $W^{2,2}(\S^2)$, we defined a photometric cost function and derived a gradient-based observer. Local convergence was further improved through a matrix-gain formulation, in particular by incorporating the Hessian matrix.
A detailed observability analysis established a local non-degeneracy condition and identified degenerate image configurations associated with invariance to $\SL(3)$ subgroups.
Simulation experiments on real image sequences confirmed convergence, robustness, and performance improvements of the proposed observers. This is the first direct, intensity-based nonlinear observer on $\SL(3)$.
The proposed dense framework could be in principle extended to other state estimation problems where the system evolves on a matrix Lie group and dense sensor measurements are available. A natural direction is the estimation of rigid-body motion on the Special Euclidean Group $\SE(2)$ or $\SE(3)$ using depth measurements. 


    \section*{Acknowledgment}
    This work was supported by the Grands Fonds Marins Project Deep-C, the ASTRID ANR project ASCAR (ANR-23-ASTR-0016), the European Union through the Horizon Europe Research and Innovation Programme (Grant Agreement No. 101154194, MEW), and the Australian Research Council through Discovery Grant DP250100112, “Seeing through Space and Time: Spatio-Temporal Event Processing for Robots”.


\bibliographystyle{unsrtnat}
\bibliography{autosam}

\end{document}

%% file: notation_defs.tex
\providecommand{\R}{\mathbb{R}}

\providecommand{\S}{\mathbb{S}}

\providecommand{\SO}{\mathbf{SO}}
\providecommand{\SL}{\mathbf{SL}}
\providecommand{\GL}{\mathbf{GL}}
\providecommand{\SE}{\mathbf{SE}}

\providecommand{\grpG}{\mathbf{G}}

\providecommand{\gothsl}{\mathfrak{sl}}

\providecommand{\gothg}{\mathfrak{g}}

\providecommand{\so}{\mathfrak{so}}

\providecommand{\calC}{\mathcal{C}}

\providecommand{\calL}{\mathcal{L}}
\providecommand{\calM}{\mathcal{M}}
\providecommand{\calN}{\mathcal{N}}
\providecommand{\calO}{\mathcal{O}}

\providecommand{\calU}{\mathcal{U}}

\providecommand{\calX}{\mathcal{X}}

\DeclareMathOperator{\tr}{tr}

\providecommand{\pr}{\mathbb{P}} 

\providecommand{\tT}{\mathrm{T}} 
\providecommand{\td}{\mathrm{d}}
\providecommand{\tD}{\mathrm{D}}

\providecommand{\ddt}{\frac{\td}{\td t}}

\providecommand{\Hess}{\mathrm{Hess}}

\usepackage{accents}
\makeatletter
\providecommand{\scirc}{%
    \hbox{\fontfamily{\rmdefault}\fontsize{0.4\dimexpr(\f@size pt)}{0}\selectfont{\raisebox{-0.52ex}[0ex][-0.52ex]{$\circ$}}}}

\makeatother

\mathchardef\mhyphen="2D



%% file: autosam.bib
@book{hartley2003multiple,
  title={Multiple view geometry in computer vision},
  author={Hartley, Richard and Zisserman, Andrew},
  year={2003},
  publisher={Cambridge university press}
}

@book{ma2004invitation,
  title={An invitation to 3-d vision: from images to geometric models},
  author={Ma, Yi and Soatto, Stefano and Ko{\v{s}}eck{\'a}, Jana and Sastry, Shankar},
  volume={26},
  year={2004},
  publisher={Springer}
}

@inproceedings{geiger2011stereoscan,
  title={Stereoscan: Dense 3d reconstruction in real-time},
  author={Geiger, Andreas and Ziegler, Julius and Stiller, Christoph},
  booktitle={2011 IEEE intelligent vehicles symposium (IV)},
  pages={963--968},
  year={2011},
  organization={Ieee}
}

@inproceedings{forster2014svo,
  title={SVO: Fast semi-direct monocular visual odometry},
  author={Forster, Christian and Pizzoli, Matia and Scaramuzza, Davide},
  booktitle={2014 IEEE international conference on robotics and automation (ICRA)},
  pages={15--22},
  year={2014},
  organization={IEEE}
}

@inproceedings{kerl2013robust,
  title={Robust odometry estimation for RGB-D cameras},
  author={Kerl, Christian and Sturm, J{\"u}rgen and Cremers, Daniel},
  booktitle={2013 IEEE international conference on robotics and automation},
  pages={3748--3754},
  year={2013},
  organization={IEEE}
}

@article{benhimane2007homography,
  title={Homography-based 2d visual tracking and servoing},
  author={Benhimane, Selim and Malis, Ezio},
  journal={The International Journal of Robotics Research},
  volume={26},
  number={7},
  pages={661--676},
  year={2007},
  publisher={Sage Publications Sage UK: London, England}
}

@article{mahony2012nonlinear,
  title={Nonlinear complementary filters on the special linear group},
  author={Mahony, Robert and Hamel, Tarek and Morin, Pascal and Malis, Ezio},
  journal={International Journal of Control},
  volume={85},
  number={10},
  pages={1557--1573},
  year={2012},
  publisher={Taylor \& Francis}
}

@inproceedings{hamel2011homography,
  title={Homography estimation on the special linear group based on direct point correspondence},
  author={Hamel, Tarek and Mahony, Robert and Trumpf, Jochen and Morin, Pascal and Hua, Minh-Duc},
  booktitle={2011 50th IEEE Conference on Decision and Control and European Control Conference},
  pages={7902--7908},
  year={2011},
  organization={IEEE}
}

@inproceedings{szeliski1995direct,
  title={Direct methods for visual scene reconstruction},
  author={Szeliski, Richard and Kang, Sing Bing},
  booktitle={Proceedings IEEE Workshop on Representation of Visual Scenes (In Conjunction with ICCV'95)},
  pages={26--33},
  year={1995},
  organization={IEEE}
}

@article{shum2002construction,
  title={Construction of panoramic image mosaics with global and local alignment},
  author={Shum, Heung-Yeung and Szeliski, Richard},
  journal={International Journal of Computer Vision},
  volume={48},
  number={2},
  pages={151--152},
  year={2002},
  publisher={Springer Netherlands}
}

@article{baker2004lucas,
  title={Lucas-kanade 20 years on: A unifying framework},
  author={Baker, Simon and Matthews, Iain},
  journal={International journal of computer vision},
  volume={56},
  number={3},
  pages={221--255},
  year={2004},
  publisher={Springer}
}

@inproceedings{benhimane2004real,
  title={Real-time image-based tracking of planes using efficient second-order minimization},
  author={Benhimane, Selim and Malis, Ezio},
  booktitle={2004 IEEE/RSJ International Conference on Intelligent Robots and Systems (IROS)(IEEE Cat. No. 04CH37566)},
  volume={1},
  pages={943--948},
  year={2004},
  organization={IEEE}
}

@article{kaminski2004multiple,
  title={Multiple view geometry of general algebraic curves},
  author={Kaminski, Jeremy Yermiyahou and Shashua, Amnon},
  journal={International Journal of Computer Vision},
  volume={56},
  number={3},
  pages={195--219},
  year={2004},
  publisher={Springer}
}

@article{agarwal2005survey,
  title={A survey of planar homography estimation techniques},
  author={Agarwal, Anubhav and Jawahar, CV and Narayanan, PJ},
  journal={Centre for Visual Information Technology, Tech. Rep. IIIT/TR/2005/12},
  year={2005},
  publisher={International Institute of Information Technology Hyderabad, India}
}

@inproceedings{lucas1981iterative,
  title={An iterative image registration technique with an application to stereo vision},
  author={Lucas, Bruce D and Kanade, Takeo},
  booktitle={IJCAI'81: 7th international joint conference on Artificial intelligence},
  volume={2},
  pages={674--679},
  year={1981}
}

@article{szeliski2007image,
  title={Image alignment and stitching: A tutorial},
  author={Szeliski, Richard and others},
  journal={Foundations and Trends{\textregistered} in Computer Graphics and Vision},
  volume={2},
  number={1},
  pages={1--104},
  year={2007},
  publisher={Now Publishers, Inc.}
}

@inproceedings{irani1999direct,
  title={About direct methods},
  author={Irani, Michal and Anandan, Prabu},
  booktitle={International Workshop on Vision Algorithms},
  pages={267--277},
  year={1999},
  organization={Springer}
}

@inproceedings{shum1998construction,
  title={Construction and refinement of panoramic mosaics with global and local alignment},
  author={Shum, Heung-Yeung and Szeliski, Richard},
  booktitle={Sixth International Conference on Computer Vision (IEEE Cat. No. 98CH36271)},
  pages={953--956},
  year={1998},
  organization={IEEE}
}

@book{hebey1996sobolev,
  title={Sobolev spaces on Riemannian manifolds},
  author={Hebey, Emmanuel},
  volume={1635},
  year={1996},
  publisher={Springer Science \& Business Media}
}

@article{chan2024meyers,
  title={The Meyers-Serrin theorem on Riemannian manifolds: a survey},
  author={Chan, Chi Hin and Czubak, Magdalena},
  journal={arXiv preprint arXiv:2405.13322},
  year={2024}
}

@book{baker2003matrix,
  title={Matrix groups: An introduction to Lie group theory},
  author={Baker, Andrew},
  year={2003},
  publisher={Springer Science \& Business Media}
}

@article{mahony2013observers,
  title={Observers for kinematic systems with symmetry},
  author={Mahony, Robert and Trumpf, Jochen and Hamel, Tarek},
  journal={IFAC Proceedings Volumes},
  volume={46},
  number={23},
  pages={617--633},
  year={2013},
  publisher={Elsevier}
}

@article{hua2019feature,
  title={Feature-based recursive observer design for homography estimation and its application to image stabilization},
  author={Hua, Minh-Duc and Trumpf, Jochen and Hamel, Tarek and Mahony, Robert and Morin, Pascal},
  journal={Asian Journal of Control},
  volume={21},
  number={4},
  pages={1443--1458},
  year={2019},
  publisher={Wiley Online Library}
}

@article{hua2020nonlinear,
  title={Nonlinear observer design on SL (3) for homography estimation by exploiting point and line correspondences with application to image stabilization},
  author={Hua, Minh-Duc and Trumpf, Jochen and Hamel, Tarek and Mahony, Robert and Morin, Pascal},
  journal={Automatica},
  volume={115},
  pages={108858},
  year={2020},
  publisher={Elsevier}
}

@inproceedings{hua2017explicit,
  title={Explicit complementary observer design on special linear group sl (3) for homography estimation using conic correspondences},
  author={Hua, Minh-Duc and Hamel, Tarek and Mahony, Robert and Allibert, Guillaume},
  booktitle={2017 IEEE 56th Annual Conference on Decision and Control (CDC)},
  pages={2434--2441},
  year={2017},
  organization={IEEE}
}

@article{bernal2023bayesian,
  title={Bayesian Filtering for Homography Estimation},
  author={Bernal, Arturo Del Castillo and Decoste, Philippe and Forbes, James Richard},
  journal={IEEE Robotics and Automation Letters},
  year={2023},
  publisher={IEEE}
}

@inproceedings{bouazza2023equivariant,
  title={Equivariant filter for feature-based homography estimation for general camera motion},
  author={Bouazza, Tarek and Ashton, Katrina and van Goor, Pieter and Mahony, Robert and Hamel, Tarek},
  booktitle={2023 62nd IEEE Conference on Decision and Control (CDC)},
  pages={8463--8470},
  year={2023},
  organization={IEEE}
}

@article{bouazza2023nonlinear,
  title={Nonlinear constructive observer design for direct homography estimation},
  author={Bouazza, Tarek and Van Goor, Pieter and Mahony, Robert and Hamel, Tarek},
  journal={IFAC-PapersOnLine},
  volume={56},
  number={2},
  pages={1655--1660},
  year={2023},
  publisher={Elsevier}
}

@book{khalil2002nonlinear,
  title={Nonlinear systems},
  author={Khalil, Hassan K and Grizzle, Jessy W},
  volume={3},
  year={2002},
  publisher={Prentice hall Upper Saddle River, NJ}
}
